\numberwithin{equation}{section} \makeatletter
\renewcommand{\subsection}{\@startsection
{subsection}{2}{0mm}{\baselineskip}{-0.25cm}
{\normalfont\normalsize\bf}} \makeatother
\newtheorem{theorem}{Theorem}[section]
\newtheorem{lemma}[theorem]{Lemma}
\newtheorem{corollary}[theorem]{Corollary}
\newtheorem{remark}[theorem]{Remark}
\def \F {\mathcal F}
\def \N {\mathcal N}
\def \P {\mathbf P}
\def \Q {\mathbf Q}
\def \R {\mathbb R}
\def \E {\mathcal E}
\def \I {{\mathbf 1}}
\def \bF {\mathbb F}
\newcommand*\xbar[1]{%
  \hbox{%
    \vbox{%
      \hrule height 0.5pt 
      \kern0.5ex
      \hbox{%
        \kern-0.1em
        \ensuremath{#1}%
        \kern-0.1em
      }%
    }%
  }%
}
\newcommand{\ud}{\mathrm d}
\newcommand{\ds}{\displaystyle}
\newcommand{\esp}[2][\mathbb E] {#1\left[#2\right]}
\newcommand{\espp}[2][\mathbb E^{\widehat \P}] {#1\left[#2\right]}
\newcommand{\espbar}[2][\mathbb E^{\bar \P}] {#1\left[#2\right]}
\newcommand{\doleans}[1] {\mathcal E\left(#1\right)}
\begin{document}

\date{}

\author[A.~Cretarola]{Alessandra Cretarola}
\address{Alessandra Cretarola, Department of Mathematics and Computer Science,
 University of Perugia, Via Luigi Vanvitelli, 1, I-06123 Perugia, Italy.}\email{alessandra.cretarola@unipg.it}

\author[G.~Figà Talamanca]{Gianna Figà Talamanca}
\address{Gianna Figà Talamanca, Department of Economics,
University of Perugia, Via Alessandro Pascoli,
I-06123 Perugia, Italy.}\email{gianna.figatalamanca@unipg.it}


%
%
\title[A confidence-based model for asset and derivative prices in the BitCoin market
]{A confidence-based model for asset and derivative prices in the BitCoin market}

\begin{abstract}
We endorse the idea, suggested in recent literature, that BitCoin prices are influenced by sentiment and confidence about the underlying technology; as a consequence, an excitement about the BitCoin system may propagate to  BitCoin prices causing a Bubble effect, the presence of which is documented in several papers about the cryptocurrency. 
In this paper we develop a bivariate model in continuous time to describe the price dynamics of one BitCoin as well as the behavior of a second factor affecting the price itself, which we name confidence indicator.
The two dynamics are possibly correlated and we also take into account a delay between the confidence indicator and its delivered effect on the BitCoin price. 
Statistical properties of the suggested model are investigated and
its arbitrage-free property is shown.
Further, based on risk-neutral evaluation, a quasi-closed formula is derived for European style derivatives on the BitCoin. A short numerical application is finally provided.
\end{abstract}

\maketitle

{\bf Keywords}: BitCoin, sentiment, stochastic models, equivalent martingale measure, option pricing.

\section{Introduction}

The BitCoin was first introduced as an electronic payment system between peers. It is based on an open source software which generates a peer to peer network. This network includes a high number of computers connected to each other through the Internet and complex mathematical procedures are implemented both to check the truthfulness of the transaction and to generate new BitCoins. Opposite to traditional transactions, which are based on the trust in financial intermediaries, this system relies on the network, on the fixed rules and on cryptography. The open source software was created in 2009 by a computer scientist known under the pseudonym Satoshi Nakamoto, whose identity is still unknown. BitCoin has several attractive properties for consumers: it does not rely on central banks to regulate the money supply and it enables essentially anonymous transactions. Besides, transactions are irreversible and can also be very small. BitCoins can be purchased on appropriate websites that allow to change usual currencies in BitCoins. Further, payments can be made in BitCoins for several online services and goods and its use is increasing. Special applications have been designed for smartphones and tablets for transactions in BitCoins and some ATM have appeared all over the world (see Coin ATM radar) to change traditional currencies in BitCoins. At very low expenses it is also possible to send cryptocurrency internationally. 
Indeed, BitCoin have experienced a rapid growth both in value and in the number of transactions as shown in Figure \ref{BitCoin}.
\begin{figure}[htbp]
\includegraphics[width=13cm, height=6cm]{
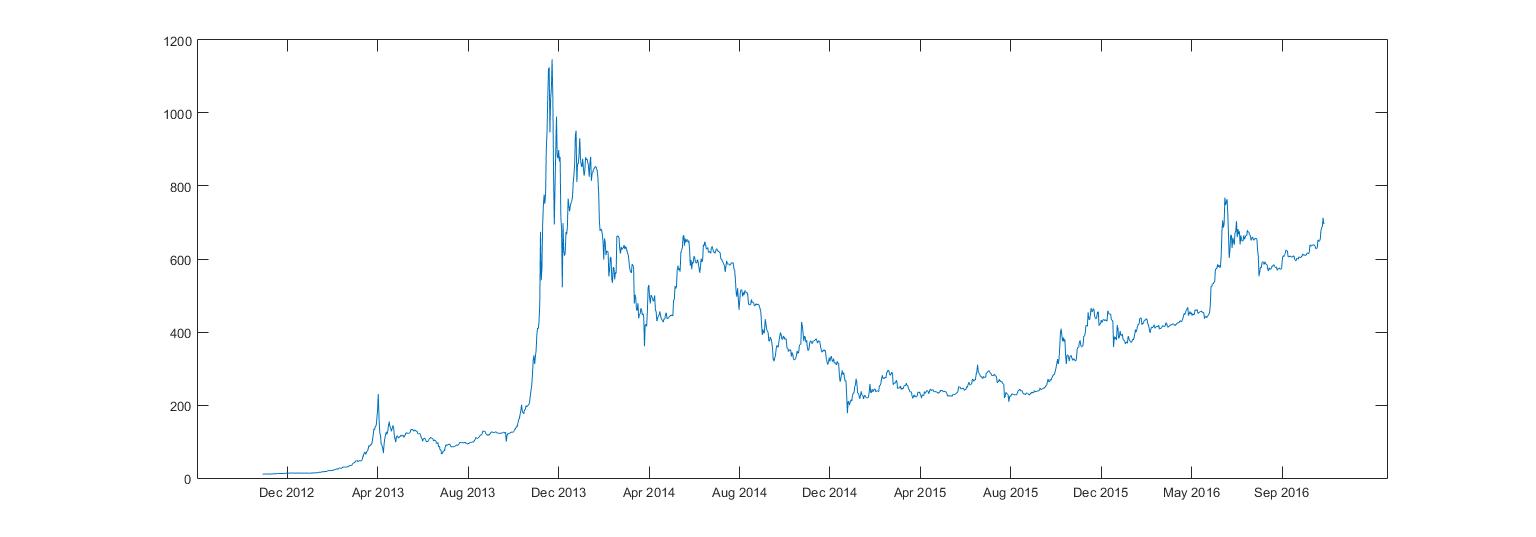}
\caption{Daily average BitCoin price (Dec 2012 - Sep 2016).} \label{BitCoin}
\end{figure}
Although the popularity of BitCoins has increased significantly, cryptocurrencies still face important issues. One of the main issues about BitCoin is whether it should be considered a currency, a commodity or a stock. In \citet{Yermack}, the author performs a detailed qualitative analysis of BitCoin price behavior. He remarks that a currency is usually characterized by three properties: a medium of exchange, a unit of account and a store of value. BitCoin is indeed a medium of exchange, though limited in relative volume of transactions and essentially restricted to online markets; however it lacks the other two properties. BitCoin value is rather volatile and traded for different prices in different exchanges, making it unreliable as a unit of account.  Further, BitCoins can be only deposited in a digital wallet which is costly and possibly subject to hacking attacks, thefts and other issues related to cyber-security.
The conclusion in  \citet{Yermack} is that BitCoin behaves as a high volatility stock and that most transactions on BitCoins are aimed to speculative investments. A second issue about BitCoin prices is the possibility of arbitrage given that it is traded on different web-exchanges for different prices; a pioneering theoretical contribution on this topic is given by \citet{doi:10.1080/14697688.2016.1231928}.

In recent years several papers have appeared in order to model BitCoin price behavior. Interesting results in discrete time are obtained in \citet{HencGou} where the authors  model BitCoin price changes and market bubbles through a non causal econometric model. 

Besides, many authors claim that the high volatility in BitCoin prices and the occurrence of speculative bubbles depend on positive sentiment and confidence about the BitCoin market itself: of course confidence on BitCoin or, more generally, on cryptocurrencies or IT finance is not directly observed  but several variables may be considered as indicators, from the more traditional volume or number of transactions to the number of Google searches or Wikipedia requests about the topic, in the period under investigation. Main references in this area are \citet{MainDrivers, GoogleTrends, SentimentAnalysis}. Alternatively, in \citet{bukovina2016sentiment} confidence is measured by sentiment related to the BiCoin system and made available from the website Sentdex.com. This website collect data on sentiment through an algorithm, based on Natural Language Processing techniques,  which is capable of identifying string of words conveying positive, neutral or negative sentiment on a topic (BitCoin in this case). The authors of the paper develop a model in discrete time and show that positive sentiment, that is excessive confidence on the system, may indeed boost a Bubble on the BitCoin price.   
We borrow the idea from the quoted papers and develop a bivariate model in continuous time to describe both the dynamics of a BitCoin confidence indicator and of the corresponding BitCoin price. We also account for a possible delay between this indicator and its effect on BitCoin prices. 

In this paper we focus on theoretical properties for the suggested model; the choice for the most suitable confidence index among the ones proposed in the literature, of course crucial for applications of our model, is postponed to future research.
In particular, after analyzing some statistical properties of the model, we give conditions under which the model is arbitrage-free and, based on risk-neutral evaluation, we derive a quasi-closed formula for European style derivatives on the BitCoin.
It is worth noticing that a market for this contingent claims has recently raised on appropriate websites such as  https://coinut.com,  trading European Calls and Puts as well as Binary options, i.e. pure bets on the BitCoin price, endorsing the idea in \citet{Yermack} that BitCoins are used for speculative purposes.

The rest of the paper is structured as follows. In Section 2 we describe the model for the BitCoin price dynamics and derive its statistical properties. In Section 3 we prove a quasi-closed formula for European-style derivatives with detailed computations for Plain Vanilla and Binary option prices. Section 4 is devoted to a numerical application and Section 5 gives concluding remarks and hints for future investigations. Most technical proofs are collected in the Appendix.

\section{The BitCoin market model}

We fix a probability space $(\Omega,\F,\P)$ endowed with a filtration 
$\bF = \{\F_t,\ t \ge 0\}$ that satisfies the usual conditions of right-continuity and completeness. 
On the given probability space, we consider a main market in which heterogeneous agents buy or sell BitCoins and denote by 
$S = \{S_t,\ t \geq 0\}$ the price process
of the cryptocurrrency. We assume that the BitCoin price dynamics is described by 
the following equation:
\begin{equation} \label{eq:S}
\ud S_t = \mu_S P_{t-\tau} S_t \ud t+\sigma_S \sqrt{P_{t-\tau}}S_t\ud W_t,\quad  S_0=s_0 \in \R_+,
\end{equation}
where $\mu_{S} \in \R \setminus \{0\}$, $\sigma_{S}\in \R_+$, $\tau \in \R_+$ represent model parameters;  
$W = \{W_t,\ t \ge 0\}$ is a standard Brownian motion on $(\Omega,\F,\P)$, which is $\bF$-adapted, and
$P = \{P_t,\ t \geq 0\}$ is a stochastic factor, representing the confidence or sentiment index in the BitCoin market, satisfying
\begin{equation}\label{eq:BSdyn}
\ud P_t =\mu_P P_t\ud t+\sigma_P P_t\ud Z_t, 
\quad P_t = \phi(t),\ t \in [-L,0].
\end{equation}
Here, $\mu_P \in \R \setminus \{0\}$, $\sigma_P \in \R_+$, $Z = \{Z_t,\ t \geq 0\}$ is a standard Brownian motion on $(\Omega,\F,\P)$ adapted to $\bF$, possibly correlated with $W$, so that $\ud \langle W,Z\rangle_t = \rho \ud t$, for some constant $\rho \in [0,1]$, 
and $\phi:[-L,0] \to [0,+\infty)$ is a continuous (deterministic) initial function. Note that, the nonnegative property of the function $\phi$ corresponds to require that the minimum confidence level is zero.
It is worth noticing that in \eqref{eq:BSdyn} we also consider the effect of the past, since we assume that the confidence index $P$ affects explicitly the BitCoin price $S_t$ up to a certain preceding time $t-\tau$. Assuming that $\tau<L$ and that factor $P$ is observed in the period $[-L,0]$ makes the biviariate model jointly feasible.

It is well-known that the solution of \eqref{eq:BSdyn} is available in closed form and that $P_t$ has a lognormal distribution for each $t > 0$, see \citet{black1973pricing}. 


In order to visualize the dynamics implied by the model in equations \eqref{eq:S} and \eqref{eq:BSdyn}, we plot in Figure \ref{CambioRho} a possible simulated path of daily observations for the confidence process $P$ and the corresponding BitCoin prices $S$ within one year horizon for different levels of correlation; the BitCoin dynamics is represented for the case of independent and perfectly correlated Brownian motions as well as for the case of $\rho=0.5$.
Increasing the correlations gives a boost to the BitCoin market since it strengthens further the dependence between confidence index and price. In Figure \ref{KSdensity} the empirical density of the BitCoin price, estimated with the Kernel smoothing method, is plotted for different values of the correlation between the BitCoin price itself and the index $P$. The suggested dynamics appears to capture the increase in BitCoin volatility given by confidence/sentiment driven trading as observed in \citet{GoogleTrends, bukovina2016sentiment}. 

\begin{figure}[htbp]
\includegraphics[width=13cm, height=6cm]{
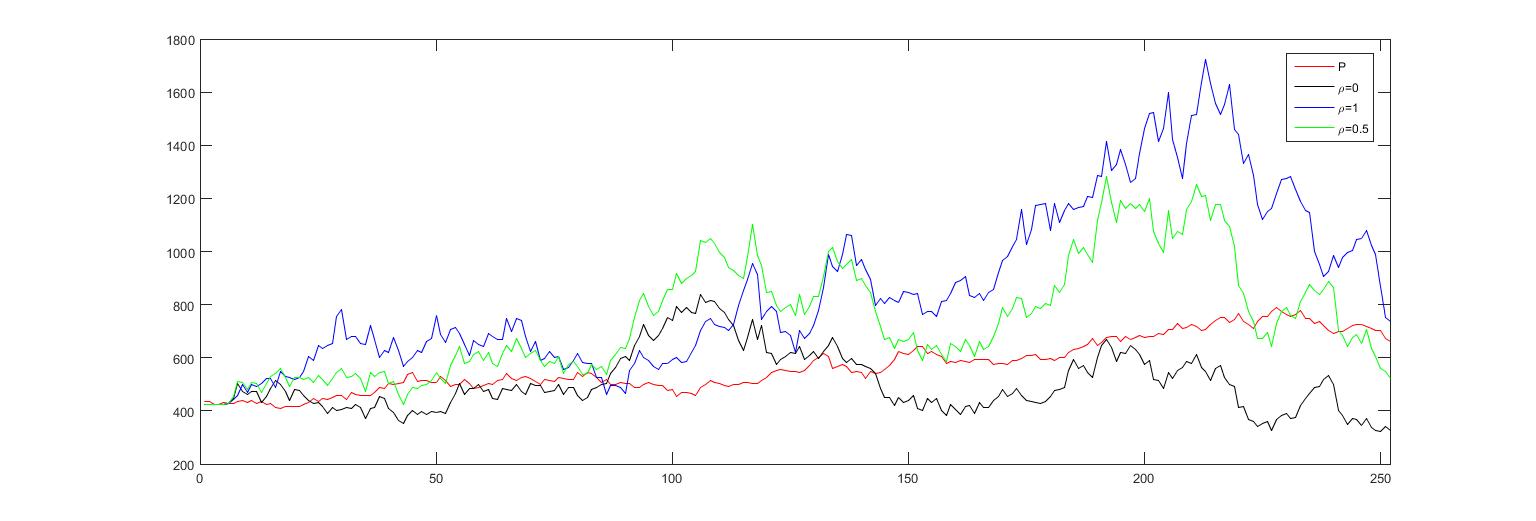}
\caption{An example of BitCoin price dynamics given the evolution of the confidence index (red): $\rho=0$ (black), $\rho=0.5$ (blue), $\rho=0.5$  (green).
Model parameters are set to $\mu_P=0.03,\sigma_P=0.35$, $\mu_S=10^{-5},\sigma_S=0.04$, $\tau=1$ week.} \label{CambioRho}
\end{figure}

\begin{figure}[htbp]
\includegraphics[width=13cm, height=6cm]{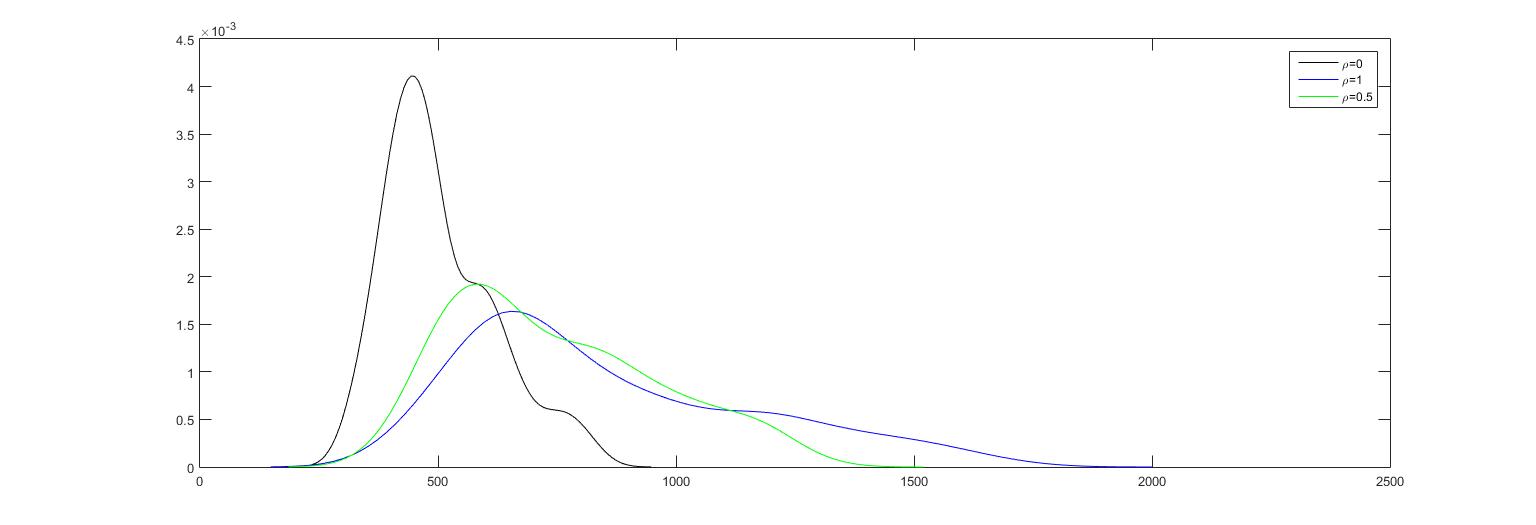} 
\caption{Estimated empirical density for simulated values of BitCoin price according to different correlation parameters with the confidence index process:  $\rho=0$ (black), $\rho=1$ (blue), $\rho=0.5$  (green). Model parameters are set to $\mu_P=0.03,\sigma_P=0.35$, $\mu_S=10^{-5},\sigma_S=0.04$ and $\tau=1$ week.}\label{KSdensity}
\end{figure}
Nevertheless, the shape of the paths for different correlation parameters are similar. This motivates the focus on the non correlated case in the rest of the paper. This means that
$W$ and $Z$ turn out to be independent $(\bF,\P)$-Brownian motions in the underlying market model. Further research will be devoted to the general case in the next future. 

In Figure \ref{CambioTau} the same paths are reported letting $\tau$ vary; as expected market reaction to sentiment is delayed when $\tau$ increases. 

\begin{figure}[htbp]
\includegraphics[width=13cm, height=6cm]{
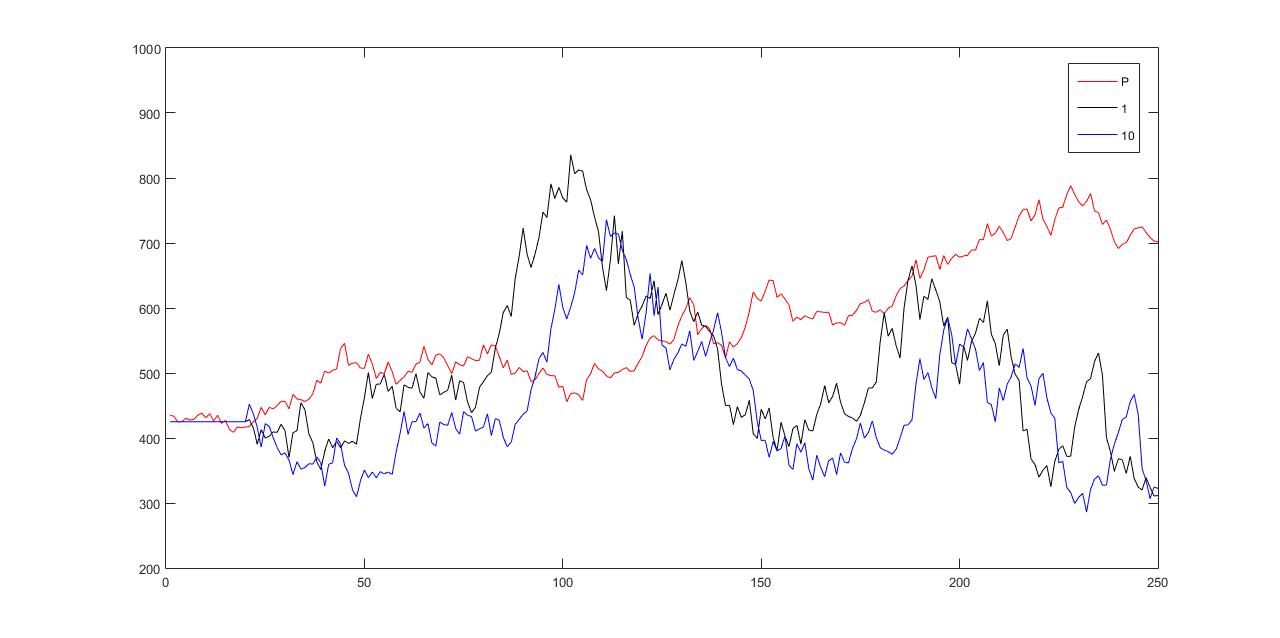} 
\caption{An example of BitCoin price dynamics given the evolution of the confidence index (red) in the independent case $\rho=0$: $\tau=1$ day (black), $\tau=2$ weeks (blue). Model parameters are set to $\mu_P=0.03,\sigma_P=0.35$, $\mu_S=10^{-5},\sigma_S=0.04$.}\label{CambioTau}
\end{figure}
We assume that the reference filtration $\bF=\{\F_t,\ t \geq 0\}$, describing the information on the BitCoin market, is of the form
$$
\F_t=\F_t^W \vee \F_t^Z, \quad t \ge 0,
$$
where $\F_t^W$ and $\F_t^Z$ denote the $\sigma$-algebras generated by $W_t$ and $Z_t$ respectively up to time $t \geq 0$. Note that $\F_t^Z=\F_t^P$, for each $t \geq 0$. Moreover, we suppose that the available information to traders
is described by the filtration $\widetilde \bF=\{\widetilde \F_t,\ t \geq 0\}$, defined by
$$
\widetilde \F_t = \F_t^W  \vee \F_{t-\tau}^P, \quad t \geq 0.
$$
We also remark that all filtrations satisfy the usual conditions of completeness and right-continuity (see e.g. \citet{protter2005stochastic}).
Now, set $\bar P_{0,t}:=\int_{0}^{t}P_{u}\ud u$, for each $t > \tau$. 
Then, we can introduce the {\em integrated information process} 
$X^\tau=\{X_t^\tau,\ t \ge 0\}$ associated to the confidence index $P$, as follows:
\begin{equation} \label{eq:int_info}
X_t^\tau  := \int_0^t P_{u-\tau} \ud u =\int_{-\tau}^{0} \phi(u) \ud u + \int_0^{t-\tau} P_u \ud u= X_\tau^\tau + \int_0^{t-\tau} P_u \ud u, \quad t \ge 0.
\end{equation}
Note that, $X_t^\tau$ turns out to be deterministic for $t \in [0,\tau]$. Indeed,
\begin{equation} \label{eq:int2_info}
\int_0^t P_{u-\tau} \ud u  = \left\{ 
\begin{array}{ll}
0, & \quad  \mbox{if}\ t=0,\\
\int_{-\tau}^{t-\tau} \phi(u) \ud u = X_\tau^\tau + \int_0^{t-\tau} \phi(u) \ud u, & \quad t \in (0,\tau], \\
\int_{-\tau}^{0} \phi(u) \ud u + \int_0^{t-\tau} P_u \ud u = X_\tau^\tau + \bar P_{0,t-\tau}, & \quad  \mbox{if}\ t > \tau.
\end{array}
\right.
\end{equation}
The following lemma establishes basic statistical properties of the integrated information process $X_t^\tau$, for $t>\tau$. 
\begin{lemma}\label{th:means}
In the market model outlined above, for every $t > \tau$, 
we have:
\begin{align*}
\esp{X_t^\tau} & = 
X_\tau^\tau +\frac{\phi(0)}{\mu_{P}}\left(\exp\left(\mu_{P}(t-\tau)\right) -1\right);\\
{\rm Var}[X_t^\tau] & = 
\frac{2\phi^{2}(0)}{\left(\mu_{P}+\sigma_{P}^{2}\right)\left(2\mu_{P}+\sigma_{P}^{2}\right)}\left[\exp\left(\left(2\mu_{P}+\sigma_{P}^{2}\right)(t-\tau)\right) -1\right]\\
& \qquad -\frac{2\phi^{2}(0)}{\mu_{P}\left(\mu_{P}+\sigma_{P}^{2}\right)}\left(\exp\left(\mu_{P}(t-\tau)\right) -1\right)-\left(\frac{\phi(0)}{\mu_{P}}\left(\exp\left(\mu_{P}(t-\tau)\right) -1\right)\right)^2.
\end{align*}
For $t\in [0,\tau]$, we get $\esp{X_t^\tau}=\int_{-\tau}^{\tau-t} \phi(u) \ud u$ and ${\rm Var}[X_t^\tau]=0$.
\end{lemma}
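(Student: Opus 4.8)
The plan is to rely on the explicit lognormal form of the solution to \eqref{eq:BSdyn}: for $t > 0$ one has
\[
P_t = \phi(0)\exp\left(\left(\mu_P - \tfrac{\sigma_P^2}{2}\right)t + \sigma_P Z_t\right).
\]
By \eqref{eq:int_info}--\eqref{eq:int2_info}, for $t > \tau$ the process $X_t^\tau$ is the sum of the deterministic constant $X_\tau^\tau$ and the random integral $\int_0^{t-\tau} P_u\,\ud u$, so both its mean and its variance reduce to integrated moments of $P$. Since $X_t^\tau$ depends only on $Z$, the correlation between $W$ and $Z$ is irrelevant here.

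For the expectation, I would invoke Tonelli's theorem to exchange $\esp{\cdot}$ with the time integral, substitute the lognormal first moment $\esp{P_u} = \phi(0)\exp(\mu_P u)$, and integrate over $[0,t-\tau]$; adding back $X_\tau^\tau$ gives the claimed formula.

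For the variance, the deterministic part drops out, so it suffices to evaluate ${\rm Var}\left[\int_0^{t-\tau} P_u\,\ud u\right]$ through ${\rm Var}[Y] = \esp{Y^2} - (\esp{Y})^2$; the square of the mean already computed yields the final negative summand. Writing the second moment as the double integral
\[
\esp{\left(\int_0^{t-\tau} P_u\,\ud u\right)^2} = \int_0^{t-\tau}\!\!\int_0^{t-\tau} \esp{P_u P_v}\,\ud u\,\ud v,
\]
I would compute the cross-moment from the joint lognormality of $(P_u,P_v)$: using $\esp{Z_u Z_v} = \min(u,v)$ one finds $\esp{P_u P_v} = \phi^2(0)\exp\left(\mu_P(u+v) + \sigma_P^2\min(u,v)\right)$.

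The key step --- and the main obstacle --- is to resolve the $\min(u,v)$ by splitting the square $[0,t-\tau]^2$ into the triangles $\{u \le v\}$ and $\{u > v\}$, which by symmetry contribute equally; on $\{u \le v\}$ one has $\min(u,v) = u$, and the resulting iterated integral factors into elementary exponential integrals. Carrying out the inner integration in $u$ (producing the denominator $\mu_P + \sigma_P^2$) and then the outer one in $v$ (producing $2\mu_P + \sigma_P^2$ and $\mu_P$) gives precisely the first two summands of the stated variance; the bookkeeping of these denominators is where care is needed. Finally, the case $t \in [0,\tau]$ is immediate from \eqref{eq:int2_info}, which expresses $X_t^\tau$ as a deterministic integral of $\phi$, so that its variance vanishes and its expectation equals that integral.
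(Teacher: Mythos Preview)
Your proposal is correct and follows essentially the same route as the paper's proof: both reduce to integrating the lognormal moments of $P$ via Fubini/Tonelli, compute the variance through $\esp{Y^2}-(\esp{Y})^2$, and evaluate the second moment as twice the integral over the triangle $\{u\le v\}$. The only cosmetic difference is that you obtain $\esp{P_uP_v}=\phi^2(0)\exp\bigl(\mu_P(u+v)+\sigma_P^2\min(u,v)\bigr)$ directly from the joint lognormality of $(Z_u,Z_v)$, whereas the paper derives the equivalent expression $\phi^2(0)\exp\bigl(\mu_P(v-u)\bigr)\exp\bigl((2\mu_P+\sigma_P^2)u\bigr)$ by conditioning on $\F_u^P$ and using the independence of Brownian increments.
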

The proof is postponed to Appendix \ref{appendix:technical}.\\
The system given by equations \eqref{eq:S} and \eqref{eq:BSdyn}
is well-defined in $\R_+$ as stated in the following theorem, which also provides its explicit solution.
\begin{theorem} \label{th:sol}
In the market model outlined above, the followings hold:
\begin{itemize}
\item[(i)] the bivariate stochastic delayed differential equation 
\begin{equation}\label{eq:Bivdyn}
\left\{
\begin{array}{ll}
\ud S_t = \mu_S P_{t-\tau} S_t \ud t+\sigma_S \sqrt{P_{t-\tau}}S_t\ud W_t, \quad  S_0=s_0 \in \R_+,  \\
\ud P_t =\mu_P P_t\ud t+\sigma_P P_t\ud Z_t, \quad P_t = \phi(t),\ t \in [-L,0], \\
\end{array}
\right.
\end{equation}
has a continuous, $\bF$-adapted, unique solution $(S,P)=\{(S_t,P_t),\ t \geq 0\}$ given by
\begin{align}
S_t & =s_0\exp\left(\left(\mu_{S}-\frac{\sigma_{S}^{2}}{2}\right)\int_0^t P_{u-\tau} \ud u+\sigma_S \int_0^t \sqrt{P_{u-\tau}}\ud W_u\right),\quad t \ge 0,\label{eq:sol_S}\\
P_t & = \phi(0)\exp\left(\left(\mu_{P}-\frac{\sigma_{P}^{2}}{2}\right)t+\sigma_P Z_t\right), \quad t \ge 0. \label{eq:sol_P}
\end{align}
More precisely, $S$ can be computed step by step as follows: for $k=0,1,2,\ldots$ and $t \in [k\tau,(k+1)\tau]$,
\begin{equation} \label{eq:sol_S1} 
S_t  = S_{k\tau}\exp\left(\left(\mu_{S}-\frac{\sigma_{S}^{2}}{2}\right)\int_{k\tau}^t P_{u-\tau} \ud u+\sigma_S \int_ {k\tau}^t \sqrt{P_{u-\tau}}\ud W_u\right). 
\end{equation}
In particular, 
$P_t \ge 0$ $\P$-a.s. for all $t \geq 0$. If in addition, $\phi(0) > 0$, then $P_t > 0$ $\P$-a.s. for all $t \geq 0$.
\item [(ii)] Further, for every $t \ge 0$, the conditional distribution of $S_{t}$, given the integrated information $X_t^\tau$,
is Log-Normal with mean $\log \left(s_0\right) + \left(\mu_{S}-\frac{\sigma_{S}^{2}}{2}\right)X_t^\tau$ and variance $\sigma_{S}^{2}X_t^\tau$.
\item [(iii)] Finally, for every $t \in [0,\tau]$, the random variable $\log \left( S_t \right)$ has mean $\log \left(s_0\right) + \left(\mu_{S}-\frac{\sigma_{S}^{2}}{2}\right)X_t^\tau$ and variance $\sigma_{S}^{2}X_t^\tau$; for every $t > \tau$,  $\log \left(S_t\right)$ has mean and variance respectively given by
\begin{align*}
\esp{\log \left( S_t \right)} & = \log \left(s_0\right) + \left(\mu_S-\frac{\sigma_S^2}{2}\right)  \esp{X_t^\tau};\\
{\rm Var}\left[{\log \left( S_t \right)}\right] & = \left(\mu_S-\frac{\sigma_S^2}{2}\right)^2{\rm Var}[X_t^\tau]+ \sigma_S^2 \esp{X_t^\tau},
\end{align*}
where $\esp{X_t^\tau}$ and ${\rm Var}[X_t^\tau]$ are both provided by Lemma \ref{th:means}.
\end{itemize}
\end{theorem}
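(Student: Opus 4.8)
The plan is to treat the two equations of \eqref{eq:Bivdyn} sequentially, exploiting the fact that the $P$-dynamics does not depend on $S$.

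First I would observe that \eqref{eq:BSdyn} is a classical geometric Brownian motion with initial datum $P_0=\phi(0)$; its unique strong solution is \eqref{eq:sol_P}, which one checks by Itô's formula and which is continuous and $\F_t^P$-adapted, hence $\bF$-adapted. Positivity is immediate from the exponential form: $P_t\ge 0$ always, and $P_t>0$ when $\phi(0)>0$. With $P$ in hand, the first line of \eqref{eq:Bivdyn} becomes a \emph{linear} SDE in $S$ with random but $\bF$-adapted coefficients $\mu_S P_{t-\tau}$ and $\sigma_S\sqrt{P_{t-\tau}}$. Since $P$ is continuous and a.s. finite, the drift is pathwise integrable and the diffusion square-integrable on compacts, so the stochastic exponential is well-defined; I would then set $S_t=s_0\exp(Y_t)$ with $Y_t=(\mu_S-\tfrac{\sigma_S^2}{2})\int_0^t P_{u-\tau}\,\ud u+\sigma_S\int_0^t\sqrt{P_{u-\tau}}\,\ud W_u$ and apply Itô's formula, using $\ud\langle Y\rangle_t=\sigma_S^2 P_{t-\tau}\,\ud t$, to recover \eqref{eq:S}. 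Uniqueness follows from standard linear-SDE theory. To obtain the step-by-step representation \eqref{eq:sol_S1} and to make the global well-posedness rigorous, I would argue by the method of steps: on $[0,\tau]$ the delayed factor $P_{u-\tau}=\phi(u-\tau)$ is deterministic, which pins down $S$ on $[0,\tau]$; on $[\tau,2\tau]$ the factor $P_{u-\tau}$ is already determined by \eqref{eq:sol_P}, and so forth, splitting the integral in $Y_t$ at the nodes $k\tau$.

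For (ii) the decisive ingredient is the independence of $W$ and $Z$ assumed in the uncorrelated setting, which makes the whole path of $P$ (a functional of $Z$) independent of $W$. Conditionally on $\F_{t-\tau}^P$ the integrand $\sqrt{P_{u-\tau}}$, $u\in[0,t]$, is deterministic, so $\int_0^t\sqrt{P_{u-\tau}}\,\ud W_u$ is a Wiener integral of a deterministic function and is therefore centered Gaussian with variance $\int_0^t P_{u-\tau}\,\ud u=X_t^\tau$. Since $\log S_t=\log s_0+(\mu_S-\tfrac{\sigma_S^2}{2})X_t^\tau+\sigma_S\int_0^t\sqrt{P_{u-\tau}}\,\ud W_u$ and the first two terms are $\F_{t-\tau}^P$-measurable, the conditional law of $\log S_t$ given $\F_{t-\tau}^P$ is $N\!\big(\log s_0+(\mu_S-\tfrac{\sigma_S^2}{2})X_t^\tau,\ \sigma_S^2 X_t^\tau\big)$. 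The last step is to pass from $\F_{t-\tau}^P$ to the coarser $\sigma(X_t^\tau)\subseteq\F_{t-\tau}^P$: because this conditional law depends on $\F_{t-\tau}^P$ only through $X_t^\tau$, the tower property yields the same law when conditioning on $X_t^\tau$ alone, giving the claimed Log-Normal distribution.

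Finally, (iii) is a corollary of (ii). For $t\in[0,\tau]$ the quantity $X_t^\tau$ is deterministic by \eqref{eq:int2_info}, so the conditional statement becomes unconditional and produces the stated mean and variance directly. For $t>\tau$ I would apply the tower property and the law of total variance, conditioning on $X_t^\tau$: the mean is $\esp{\log S_t}=\log s_0+(\mu_S-\tfrac{\sigma_S^2}{2})\esp{X_t^\tau}$, while ${\rm Var}[\log S_t]=\esp{{\rm Var}[\log S_t\mid X_t^\tau]}+{\rm Var}[\esp{\log S_t\mid X_t^\tau}]=\sigma_S^2\esp{X_t^\tau}+(\mu_S-\tfrac{\sigma_S^2}{2})^2{\rm Var}[X_t^\tau]$, and inserting the expressions for $\esp{X_t^\tau}$ and ${\rm Var}[X_t^\tau]$ from Lemma \ref{th:means} completes the argument. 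The step I expect to be the main obstacle is the conditioning argument in (ii): one must justify carefully that conditionally on $P$ the stochastic integral is genuinely Gaussian (not merely a local martingale) and that the reduction from $\F_{t-\tau}^P$ to $\sigma(X_t^\tau)$ is legitimate; everything else is routine once the independence of $W$ and $Z$ is exploited.
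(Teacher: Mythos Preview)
Your proposal is correct and matches the paper's approach in all three parts: the method of steps for (i), conditional Gaussianity of the stochastic integral via independence of $W$ and $Z$ followed by reduction from $\F_{t-\tau}^P$ to $\sigma(X_t^\tau)$ for (ii), and tower arguments for (iii). The only cosmetic differences are that the paper carries out the reduction in (ii) by computing the conditional moment-generating function explicitly, and in (iii) expands the second moment directly rather than invoking the law of total variance.
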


\begin{proof}
{\bf Point (i)}. 
Clearly, $S$ and $P$, given in \eqref{eq:sol_S} and \eqref{eq:sol_P} respectively, are $\bF$-adapted processes with continuous trajectories.
Similarly to \citet[Theorem 2.1]{mao2013delay}, we provide existence and uniqueness of a strong solution to the pair of stochastic differential equations in
system \eqref{eq:Bivdyn} by using forward induction steps of length $\tau$,
without the need of checking any additional assumptions on the coefficients, e.g. the local Lipschitz condition and the linear growth condition.

First, note that the second equation in the system \eqref{eq:Bivdyn} does not depend on $S$, and its solution is well known for all $t \ge 0$. Clearly, equation \eqref{eq:sol_P} says that $P_t \ge 0$ $\P$-a.s. for all $t \ge 0$ and that $\phi(0) > 0$ implies that the solution $P$ remains strictly greater than $0$ over $[0,+\infty)$, i.e. $P_t > 0$, $\P$-a.s. for all $t \ge 0$.\\
Next, by the first equation in 
\eqref{eq:Bivdyn} and applying It\^o’s formula to $\log \left(S_t\right)$, we get
\begin{equation} \label{eq:logS}
\ud\log \left(S_t\right) =\left(\mu_S-\frac{\sigma_S^2}{2}\right)P_{t-\tau} \ud t + \sigma_S\sqrt{P_{t-\tau}} \ud W_t,
\end{equation}
or equivalently, in integral form
\begin{equation} \label{eq:log_int}
\log\left(\frac{S_{t}}{s_{0}}\right) = \left(\mu_S-\frac{\sigma_S^2}{2}\right)\int_0^t P_{u-\tau} \ud u + \sigma_S\int_0^t\sqrt{P_{u-\tau}} \ud W_u,\quad t \ge 0.
\end{equation}
For $t \in [0,\tau]$, \eqref{eq:log_int} can be written as
\begin{equation} \label{eq:log1}
\log\left(\frac{S_{t}}{s_{0}}\right) = \left(\mu_S-\frac{\sigma_S^2}{2}\right)\int_0^t \phi \left(u-\tau \right) \ud u + \sigma_S\int_0^t\sqrt{\phi \left(u-\tau \right)} \ud W_u,
\end{equation}
that is, \eqref{eq:sol_S1} holds for $k = 0$.

Given that $S_t$ is now known for $t \in [0,\tau]$, we may restrict the first equation in \eqref{eq:Bivdyn} on $t \in [\tau, 2\tau]$, so that it corresponds to consider \eqref{eq:logS} for $t \in [\tau, 2\tau]$.
Equivalently, in integral form,
\begin{equation} \label{eq:log}
\log\left(\frac{S_{t}}{S_{\tau}}\right) = \left(\mu_S-\frac{\sigma_S^2}{2}\right)\int_\tau^t P_{u-\tau} \ud u + \sigma_S\int_\tau^t\sqrt{P_{u-\tau}} \ud W_u.
\end{equation}
This shows that \eqref{eq:sol_S1} holds for $k = 1$.
Similar computations for $k=2,3,\ldots$, give the final result.

{\bf Point (ii)}. 
Set  $Y_t:=\int_0^t\sqrt{\phi \left( t-\tau \right)} \ud W_u $, for $t \in [0,\tau]$ and $Y_t:= Y_{k\tau}+ \int_{k\tau}^t\sqrt{P_{u-\tau}} \ud W_u $, for $t \in (k\tau,(k+1)\tau]$, with $k=1,2,\ldots$. Then, by applying the outcomes in Point (i) and the decomposition
$$
\log\left(\frac{S_{t}}{s_{0}}\right) =\log\left(\frac{S_{t}}{S_{k\tau}}\right)+\sum_{j=0}^{k-1}\log\left(\frac{S_{(j+1)\tau}}{S_{j\tau}}\right), 
$$
for $t\in (k\tau,(k+1)\tau]$, with $k=1,2,\ldots$, we can write
\begin{equation} \label{eq:LOG}
\log\left(S_{t}\right)= \log(s_0)+\left(\mu_S-\frac{\sigma_S^2}{2}\right)X_t^\tau +\sigma_S Y_t,\quad t \ge 0.
\end{equation}
To complete the proof, 
it suffices to show that, for each $t\ge 0$ the random variable $Y_t$, conditional on $X_t^\tau$, is Normally distributed with mean $0$ and variance $X_t^\tau$. This is straightforward from \eqref{eq:log1} if $t \in [0,\tau]$. Otherwise, we first observe that
since $Z_{u-\tau}$ is independent of $W_u$ for every $\tau < u \leq t$, the distribution of $Y_t$, conditional on $\{Z_{u-\tau}:\ \tau < u \leq t-\tau\}=\{P_{u-\tau}:\ \tau < u \leq t-\tau\} 
=\F_{t-\tau}^P$, is Normal with mean $0$ and variance $\sigma_S^2X_t^\tau$. \\
Now, for each $t > \tau$, the moment-generating function of $Y_t$, conditioned on the history of the process $P$ up to time $t-\tau$, is given by
\begin{align*}
\esp{e^{aY_t}\Big{|}\F_{t-\tau}^{P}}
& = e^{\int_0^t \frac{a^2}{2} P_{u-\tau} \ud u} =  e^{\frac{a^2}{2}  \int_0^t  P_{u-\tau} \ud u}\\
& = e^{\frac{a^2}{2}\left(\sqrt{X_t^\tau}\right)^2}, \quad  a \in \R,
\end{align*}
that only depends on its integrated information $X_t^\tau$ up to time $t$, that is,  
$$
\esp{e^{aY_t}\Big{|}\F_{t-\tau}^{P}}=\esp{e^{aY_t}\Big{|}X_t^\tau}, \quad t > \tau.
$$

{\bf Point (iii)}. The proof is trivial for $t \in [0,\tau]$. If $t >\tau$, \eqref{eq:LOG} and Lemma \ref{th:means} together with the null-expectation property of the It\^o integral, 
give
\begin{align*}
\esp{\log \left(S_t\right)} & = \log \left(s_0\right) + \left(\mu_S-\frac{\sigma_S^2}{2}\right) \esp{X_t^\tau}\\
& = \log \left(s_0\right) + \left(\mu_S-\frac{\sigma_S^2}{2}\right)\left(X_\tau^\tau  +\frac{\phi(0)}{\mu_{P}}\left(\exp\left(\mu_{P}(t-\tau)\right) -1\right)\right).
\end{align*}
Now, we compute the variance of $\log \left(S_t\right)$. 
Since for each $t > \tau$ the random variable $Y_t $ has mean $0$ conditional on $\F_{t-\tau}^P$, we have
\begin{align*}
{\rm Var}\left[\log \left(S_t\right)\right] & = \esp{\log^2\left(S_t\right)} - \left(\esp{\log\left(S_t\right)}\right)^2\\
& = \left(\mu_S-\frac{\sigma_S^2}{2}\right)^2\esp{(X_t^\tau)^2}  + 2\left(\mu_S-\frac{\sigma_S^2}{2}\right)\esp{X_t^\tau \esp{Y_t\Big{|}\F_{t-\tau}^P}}\\
& \qquad \qquad + \sigma_S^2 \esp{X_t^\tau} - \left(\mu_S-\frac{\sigma_S^2}{2}\right)^2 \esp{X_t^\tau}^2\\
& = \left(\mu_S-\frac{\sigma_S^2}{2}\right)^2{\rm Var}[X_t^\tau]+ \sigma_S^2 \esp{X_t^\tau}.
\end{align*}

Thus, the proof is complete.
\end{proof}

\section{Existence of a risk-neutral probability measure and Derivative pricing}
Let us fix a finite time horizon $T>0$ 
and assume the existence of a riskless asset, say the money market account, whose value process $B=\{B_t,\ t \geq 0\}$ 
is given by
$$
B_t=\exp{\left(\int_0^t r(s)\ud s\right)},\quad t \geq 0,
$$
where $r:[0,+\infty) \to \R$ is a bounded, deterministic function 
representing the instantaneous risk-free interest rate.
To exclude arbitrage opportunities, we 
need to check that the set of all equivalent martingale measures for the BitCoin price process $S$ is non-empty. More precisely, it contains more than a single element, since $P$ does not represent the price of any tradeable asset, and therefore the underlying market model is incomplete.
\begin{lemma}\label{lem:measure}
Let $\phi(t) > 0$, for each $t \in [-L,0]$, in \eqref{eq:BSdyn}. Then, every equivalent 
martingale measure $\Q$ for $S$ defined on $(\Omega,\F_T)$ has the following density 
\begin{equation}\label{def:Q}
\frac{\ud \Q}{\ud \P}\bigg{|}_{\F_T}=:L_T^\Q, \quad \P-\mbox{a.s.},
\end{equation} 
where $L_T^\Q$ is the terminal value of the $(\bF,\P)$-martingale $L^\Q=\{L_t^\Q,\ t \in [0,T]\}$ given by
\begin{equation} \label{eq:L}
L_t^\Q :=\doleans{-\int_0^\cdot \frac{\mu_S P_{s-\tau}-r(s)}{\sigma_S\sqrt{P_{s-\tau}}} \ud W_s - \int_0^t\gamma_s\ud Z_s}_t, \quad t \in [0,T], 
\end{equation}
for a suitable $\bF$-progressively measurable process $\gamma=\{\gamma_t,\ t \in [0,T]\}$.
\end{lemma}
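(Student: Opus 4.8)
The plan is to exploit that the reference filtration $\bF$ is generated by the two-dimensional Brownian motion $(W,Z)$, so that the form of every density process is dictated by the martingale representation theorem, and then to pin down its $W$-component by imposing the martingale property of the discounted BitCoin price.

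First I would recall that any equivalent martingale measure $\Q$ for $S$ on $(\Omega,\F_T)$ is determined by a strictly positive $(\bF,\P)$-martingale $L^\Q=\{L_t^\Q,\ t\in[0,T]\}$ with $L_0^\Q=1$, namely $L_t^\Q=\esp{\ud \Q/\ud \P\,\big|\,\F_t}$. Since $W$ and $Z$ are independent $(\bF,\P)$-Brownian motions, the pair $(W,Z)$ is a two-dimensional $\bF$-Brownian motion that generates $\bF$; hence the martingale representation theorem applies, and because $L^\Q$ is strictly positive it can be written in the exponential form $L^\Q=\doleans{\int_0^\cdot \theta_s\,\ud W_s+\int_0^\cdot \gamma_s\,\ud Z_s}$ for some $\bF$-progressively measurable processes $\theta$ and $\gamma$, with no additional local-martingale part orthogonal to $(W,Z)$.

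Next I would apply Girsanov's theorem: under $\Q$ the processes $W_t^\Q:=W_t-\int_0^t\theta_s\,\ud s$ and $Z_t^\Q:=Z_t-\int_0^t\gamma_s\,\ud s$ are Brownian motions. Writing the discounted price $\widetilde S_t:=S_t/B_t$ and applying It\^o's formula together with \eqref{eq:S} gives
\[
\ud\widetilde S_t=\widetilde S_t\Big[\big(\mu_S P_{t-\tau}-r(t)+\sigma_S\sqrt{P_{t-\tau}}\,\theta_t\big)\,\ud t+\sigma_S\sqrt{P_{t-\tau}}\,\ud W_t^\Q\Big].
\]
The requirement that $\widetilde S$ be a (local) $\Q$-martingale forces the drift to vanish, i.e. $\mu_S P_{t-\tau}-r(t)+\sigma_S\sqrt{P_{t-\tau}}\,\theta_t=0$, whence $\theta_t=-(\mu_S P_{t-\tau}-r(t))/(\sigma_S\sqrt{P_{t-\tau}})$; substituting this back yields exactly \eqref{eq:L}. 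The second integrand $\gamma$ remains completely free, which reflects the incompleteness of the model: since $P$ is not the price of a tradeable asset, no martingale condition constrains its risk premium, and the different admissible choices of $\gamma$ generate the infinitely many equivalent martingale measures for $S$.

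The main obstacle is the well-posedness of the candidate integrand: the division by $\sigma_S\sqrt{P_{t-\tau}}$ requires $P_{t-\tau}>0$ for almost every $t$, and this is precisely where the hypothesis $\phi(t)>0$ on $[-L,0]$ is used, since by Theorem \ref{th:sol}(i) it guarantees $P_t>0$ $\P$-a.s. for all $t\ge 0$. A secondary technical point is to verify that $L^\Q$ is a genuine (true) martingale rather than merely a strict local martingale, so that $\Q$ is a bona fide probability measure equivalent to $\P$; this I would handle using the explicit log-normal form of $P$ in \eqref{eq:sol_P}, which supplies the integrability needed to control the stochastic exponential.
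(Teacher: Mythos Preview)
Your argument is correct and, in fact, tracks the statement of the lemma more literally than the paper's own proof. You start from an arbitrary equivalent martingale measure and use the martingale representation theorem for the two-dimensional Brownian filtration to force the density into the exponential form, then identify the $W$-integrand via the drift condition on $\widetilde S$. The paper proceeds in the opposite direction: it takes the candidate density \eqref{eq:L} as given and verifies (i) that $L^\Q$ is a true $(\bF,\P)$-martingale, and (ii) that the resulting measure makes $\widetilde S$ an $(\bF,\Q)$-martingale. For (i) the paper does not appeal to a generic integrability criterion as you suggest, but instead conditions on $\F_{T-\tau}^P$: since $\alpha_u$ is $\F_{T-\tau}^P$-measurable and $W$ is independent of $Z$, the stochastic integral $\int_0^T\alpha_u\,\ud W_u$ is conditionally Gaussian, and the moment-generating identity gives $\esp{L_T^\Q\mid\F_{T-\tau}^P}=1$ directly. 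Your route via martingale representation buys the ``every'' part of the claim for free and explains the role of the free process $\gamma$ cleanly; the paper's route makes the true-martingale verification explicit and concrete, which your proposal leaves somewhat open (note, incidentally, that in your direction this step is not needed for $L^\Q$ itself---if you start from an equivalent probability measure its density process is automatically a true martingale---but it \emph{is} needed to upgrade $\widetilde S$ from a $\Q$-local martingale to a $\Q$-martingale, and the same conditioning trick handles that).
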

The proof is postponed to Appendix \ref{appendix:technical}. Here $\E (Y)$ denotes the Doleans-Dade exponential of an $(\bF, \P)$-semimartingale $Y$.

In the rest of the paper, suppose that $\phi(t) > 0$, for each $t \in [-L,0]$, in \eqref{eq:BSdyn}.
Then, Lemma \ref{lem:measure} ensures that the space of equivalent martingale measures for $S$ is described by 
\eqref{eq:L}. More precisely, it is
parameterized by the process $\gamma$
which governs the change of drift of the  $(\bF,\P)$-Brownian motion $Z$. Note that
the confidence index dynamics under $\Q$ in the BitCoin market is given by
$$
\ud P_t  = (\mu_P - \sigma_P \gamma_t)P_t\ud t + \sigma_P P_t \ud Z_t^\Q, \quad P_t = \phi(t),\ t \in [-L,0].
$$
The process $\gamma$ 
can be interpreted as the risk perception associated to the future direction or future possible movements of the BitCoin market.
One simple example of a candidate equivalent martingale measure is the so-called {\em minimal martingale measure} (see e.g. \citet{follmer1991hedging}, \citet{follmer2010minimal}), 
denoted by $\widehat \P$, whose density process $L = \{L_t,\ t \in  [0,T]\}$, is given by
\begin{equation} \label{eq:Lp}
L_t :=\exp{\left(-\int_0^t \frac{\mu_S P_{s-\tau}-r(s)}{\sigma_S\sqrt{P_{s-\tau}}} \ud W_s - \frac{1}{2}\int_0^t\left(\frac{\mu_S P_{s-\tau}-r(s)}{\sigma_S\sqrt{P_{s-\tau}}}\right)^2\ud s\right)}. 
\end{equation}

This is the probability measure which corresponds to the choice $\gamma \equiv 0$ in \eqref{eq:L}. 
Intuitively, under the minimal martingale measure, say $\widehat \P$, 
the drift of the Brownian motion driving the BitCoin price process $S$ is modified to make $S$ into an $(\bF,\widehat \P)$-martingale, while the drift of the Brownian motion which is strongly orthogonal to $S$ is not affected by the change measure from $\P$ to $\widehat \P$. More precisely, under the change of measure from $\P$ to $\widehat \P$, we have two independent $(\bF,\widehat \P)$-Brownian motions $\widehat W=\{\widehat W_t,\ t \in [0,T]\}$ and $\widehat Z=\{\widehat Z_t,\ t \in [0,T]\}$ defined respectively by
\begin{align}
\widehat W_t & := W_t + \int_0^t\frac{\mu_S P_{s-\tau}-r(s)}{\sigma_S\sqrt{P_{s-\tau}}} \ud s,\quad t \in [0,T],\\
\widehat Z_t & := Z_t, \quad t \in [0,T]. \label{def:hat_Z}
\end{align}
Denote by  $\widetilde S_t=\{\widetilde S_t,\ t \in [0,T]\}$ the discounted BitCoin price process 
given by $\widetilde S_t:=\frac{S_t}{B_t}$, for each $t \in [0,T]$.
Then, the discounted BitCoin price $\widetilde S_t$, at any time $t \in [0,T]$, is given by
\begin{equation}\label{def:tilde_S}
\widetilde S_t=s_0 \exp{\left(\sigma_S\int_0^t\sqrt{P_{u-\tau}}\ud \widehat W_u - \frac{\sigma_S^2}{2}\int_0^t P_{u-\tau} \ud u\right)},\quad t \in [0,T],
\end{equation}
and the behavior of the confidence index $P$ is still described by equation \eqref{eq:BSdyn}.

Let $H=\varphi(S_T)$ be an $\widetilde \F_T$-measurable random variable representing the payoff a European-type contingent claim with date of maturity $T$, which can be traded on the underlying market. Here $\varphi : \R \to \R$ is a
a Borel-measurable function  such that $H$ is integrable under $\widehat \P$. The function $\varphi$ is usually referred to as the {\em contract function}.
The following result provides a risk-neutral pricing formula under the minimal martingale measure $\widehat \P$ for any $\widehat \P$-integrable European contingent claim. Since the martingale measure is fixed, the risk-neutral price agrees with
the arbitrage free price for
those options which can be replicated by investing on the underlying market. 
Set
$X_{t,T}^\tau:=X_T^\tau-X_t^\tau$, for each $t \in [0,T)$, where the process $X^\tau$ is defined in \eqref{eq:int_info} and denote by $\espbar{\cdot\Big{|}\widetilde \F_t}$ the conditional expectation with respect to $\widetilde \F_t$ under the probability measure $\widehat \P$ and so on.
\begin{theorem}\label{th:cont_claim}
Let  $H=\varphi(S_T)$ be the payoff a European-type contingent claim with date of maturity $T$. Then, the risk-neutral price $\Phi_t(H)$ at time $t$ of $H$ is given by
\begin{equation}\label{eq:gen_option}
\Phi_t(H) = \espp{\psi(t,S_t,X_{t,T}^\tau)\Bigg{|} S_t}, \quad t \in [0,T),
\end{equation}
where 
$\psi: [0,T) \times \R_+ \times \R_+ \longrightarrow \R$  is a Borel-measurable function 
such that
\begin{equation}\label{def:psi}
\psi(t,S_t,X_{t,T}^\tau)=B_t\espp{\frac{1}{B_T}G\left(t,S_t,X_{t,T}^\tau,Y_{t,T}\right)\Bigg{|}\F_t^W \vee \F_{T-\tau}^P},
\end{equation}
for a suitable function $G$ depending on the contract such that $G\left(t,S_t,X_{t,T}^\tau,Y_{t,T}\right)$ is $\widehat \P$-integrable.
\end{theorem}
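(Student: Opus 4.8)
The plan is to start from the risk-neutral valuation of the claim written as a discounted $\widehat\P$-conditional expectation of the payoff given the currently observed price, $\Phi_t(H)=B_t\,\espp{\tfrac{H}{B_T}\,\big|\,S_t}$, to make the explicit law of $S_T$ under $\widehat\P$ visible, and then to split this expectation by iterated conditioning so that the inner layer produces the function $\psi$ and the outer layer is exactly the conditioning on $S_t$ in \eqref{eq:gen_option}.

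First I would compute $S_T$ under $\widehat\P$. Combining the closed form \eqref{eq:sol_S} with the Girsanov transformation defining $\widehat W$ (cf.\ \eqref{def:hat_Z}), or equivalently by writing out \eqref{def:tilde_S} from $t$ to $T$, one gets $S_T=S_t\exp\!\big(\int_t^T r(u)\,\ud u-\tfrac{\sigma_S^2}{2}X_{t,T}^\tau+\sigma_S Y_{t,T}\big)$, where I set $Y_{t,T}:=\int_t^T\sqrt{P_{u-\tau}}\,\ud\widehat W_u$ and $X_{t,T}^\tau=\int_t^T P_{u-\tau}\,\ud u=\int_{t-\tau}^{T-\tau}P_v\,\ud v$. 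Hence the payoff factorises as $H=\varphi(S_T)=G(t,S_t,X_{t,T}^\tau,Y_{t,T})$ with the explicit contract-dependent map $G(t,s,x,y):=\varphi\!\big(s\exp(\int_t^T r-\tfrac{\sigma_S^2}{2}x+\sigma_S y)\big)$, and the standing $\widehat\P$-integrability of $H$ is precisely $\widehat\P$-integrability of $G(t,S_t,X_{t,T}^\tau,Y_{t,T})$.

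Next I would introduce the $\sigma$-algebra $\F_t^W\vee\F_{T-\tau}^P$. It makes both $S_t$ and $X_{t,T}^\tau$ measurable -- the latter because $X_{t,T}^\tau$ only involves $P$ on $[t-\tau,T-\tau]$ -- and it contains $\sigma(S_t)$. Since $B$ is deterministic, the tower property then gives $\Phi_t(H)=\espp{B_t\,\espp{\tfrac{1}{B_T}G(t,S_t,X_{t,T}^\tau,Y_{t,T})\,|\,\F_t^W\vee\F_{T-\tau}^P}\,\big|\,S_t}$, so everything reduces to identifying the inner expectation with $\psi$. Conditionally on $\F_t^W\vee\F_{T-\tau}^P$ the arguments $S_t$ and $X_{t,T}^\tau$ are frozen and the only surviving randomness in $G$ is $Y_{t,T}$; the crucial point -- which I would lift from the moment-generating-function computation in the proof of Theorem \ref{th:sol}(ii) -- is that, because $\widehat W$ and $Z$ are independent under $\widehat\P$ and the integrand $\sqrt{P_{\cdot-\tau}}$ is $\F_{T-\tau}^P$-measurable, the increment $Y_{t,T}$ is conditionally $\mathcal N(0,X_{t,T}^\tau)$ and independent of $\F_t^W$. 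The inner expectation is therefore the Gaussian integral $\tfrac{B_t}{B_T}\int_{\R}G(t,S_t,X_{t,T}^\tau,y)\,\tfrac{1}{\sqrt{2\pi X_{t,T}^\tau}}e^{-y^2/(2X_{t,T}^\tau)}\,\ud y$, a deterministic Borel function of $(t,S_t,X_{t,T}^\tau)$, which is exactly \eqref{def:psi}; substituting it back yields \eqref{eq:gen_option}.

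I expect the real obstacle to be this middle step: collapsing the inner conditional expectation to a function of $(t,S_t,X_{t,T}^\tau)$ alone. It needs (i) the conditional normality together with the conditional-variance identity ${\rm Var}(Y_{t,T}\,|\,\F_t^W\vee\F_{T-\tau}^P)=X_{t,T}^\tau$, which rest on the independence of the two Brownian motions and are essentially the content of Theorem \ref{th:sol}(ii); and (ii) enough integrability to justify the two conditional expectations, the interchange with the Gaussian integration, and the tower property, which is supplied by the $\widehat\P$-integrability of $H$ together with the boundedness of $r$ (hence of $B_t/B_T$). Borel-measurability of the resulting $\psi$ then follows by a standard functional-class argument once the Gaussian-kernel representation is in hand; the only bookkeeping point is to treat separately the degenerate range $t\in[0,\tau]$, where $X_t^\tau$ is deterministic, so that the representation and all conditionings remain valid there as well.
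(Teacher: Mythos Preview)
Your argument captures the core mechanism -- the conditional normality of $Y_{t,T}$ given $\F_t^W\vee\F_{T-\tau}^P$, the resulting collapse of the inner expectation to a Borel function of $(t,S_t,X_{t,T}^\tau)$, and the Gaussian-kernel representation of $\psi$ -- and this part matches the paper's proof closely. The gap is in your starting point: you take $\Phi_t(H)=B_t\,\espp{H/B_T\mid S_t}$ as the definition of the risk-neutral price, conditioning on $\sigma(S_t)$ from the outset. In the paper's framework the trader's available information is $\widetilde\F_t=\F_t^W\vee\F_{t-\tau}^P$, which is strictly larger than $\sigma(S_t)$, and the risk-neutral price is \emph{defined} as $B_t\,\espp{H/B_T\mid\widetilde\F_t}$.

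The paper therefore towers through $\widetilde\F_t\subset\F_t^W\vee\F_{T-\tau}^P$ (not $\sigma(S_t)\subset\F_t^W\vee\F_{T-\tau}^P$) to obtain $\Phi_t(H)=\espp{\psi(t,S_t,X_{t,T}^\tau)\mid\widetilde\F_t}$, and then needs a genuine additional step to pass from $\widetilde\F_t$ to $S_t$: it invokes the $\widetilde\bF$-adaptedness of $S$ together with the independence of $X_{t,T}^\tau$ from $\widetilde\F_t$, applying a standard factorisation lemma to conclude $\espp{\psi(t,S_t,X_{t,T}^\tau)\mid\widetilde\F_t}=\espp{\psi(t,S_t,X_{t,T}^\tau)\mid S_t}$. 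By conditioning on $S_t$ at the start you have assumed away exactly this reduction, which is the nontrivial content of \eqref{eq:gen_option}: the statement that the trader's richer information $\widetilde\F_t$ carries no pricing-relevant content beyond the current spot $S_t$.
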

\begin{proof}
For the sake of simplicity suppose that
$\tau<T$ and set $Y_{t,T}:= \int_t^T\sqrt{P_{u-\tau}}\ud \widehat W_u$, for each $t \in [0,T)$. Then, the risk-neutral price $\Phi_t(H)$ at time $t$ of a European-type contingent claim with payoff $H=\varphi(S_T)$ is given by
\begin{align}
\Phi_t(H) & = B_t \espp{\frac{\varphi(S_T)}{B_T}\bigg{|}\tilde \F_t}\\
& = B_t \espp{\espp{\frac{\varphi\left(S_t\exp{\left(\int_t^T r(u)\ud u-\frac{\sigma_S^2}{2}X_{t,T}^\tau+ \sigma_S Y_{t,T}\right)}\right)}{B_T}\Bigg{|}\F_t^W \vee \F_{T-\tau}^P}\Bigg{|}\tilde \F_t}, \label{eq:phi_S}
\end{align}
where
$\espp{\cdot\Big{|}\widetilde \F_t}$ denotes the conditional expectation with respect to $\widetilde \F_t$ under the minimal martingale measure $\widehat \P$. More generally, \eqref{eq:phi_S} can be written as
\begin{equation}\label{eq:G}
\Phi_t(H) = B_t \espp{\espp{\frac{G(t,S_t,X_{t,T}^\tau, Y_{t,T})}{B_T}\Bigg{|}\F_t^W \vee \F_{T-\tau}^P}\Bigg{|}\tilde \F_t},
\end{equation}
for a suitable function $G$ depending on the contract function $\varphi$.
Since the $(\bF,\P)$-Brownian motion $Z$ driving the confidence index $P$ is not affected by the change of measure from $\P$ to $\widehat \P$ by the definition of minimal martingale measure, 
we have that $Z$ is also an $(\bF,\widehat \P)$-Brownian motion independent of $\widehat W$, see \eqref{def:hat_Z}.
Hence, we can apply
the same arguments
used in point (ii) of the proof of Theorem \ref{th:sol}, to get that, for each $t \in [0,T)$, the random variable $Y_{t,T}$ 
conditioned on $\F_{T-\tau}^P$ is Normally distributed with mean $0$ and variance $X_{t,T}^\tau$. Then, we can write (in law) that $Y_{t,T}= \sqrt{X_{t,T}^\tau} \epsilon$, where $\epsilon$ is a standard Normal random variable and this allows to find a function $\psi$ such that \eqref{def:psi} holds, which means that 
the conditional expectation with respect to $\F_t^W \vee \F_{T-\tau}^P$ in \eqref{eq:G}
only depends on $S_t$ and $X_{t,T}^\tau$, for every $t \in [0,T)$. 
Consequently, the risk-neutral price $\Phi_t(H)$ can be written as
\begin{align*}
\Phi_t(H) & = \espp{\psi(t,S_t,X_{t,T}^\tau)\bigg{|}\tilde \F_t}= \espp{\psi(t,S_t,X_{t,T}^\tau)\Bigg{|} S_t},
\end{align*}
where the last equality holds since $S$ is $\widetilde \bF$-adapted and $X_{t,T}^\tau$ is independent of $\widetilde \F_t$, for each $t \in [0,T)$, 
see e.g. \citet[Lemma A.108]{pascucci2011pde}. More precisely, we have
\begin{equation}
\espp{\psi(t,S_t,X_{t,T}^\tau) \bigg{|}\widetilde \F_t} =\espp{\psi(t,S_t,X_{t,T}^\tau) \bigg{|} S_t}=g(S_t),
\end{equation}
where
\begin{equation}
g(s)=\espp{\psi(t,s,X_{t,T}^\tau) \bigg{|} S_t=s},\quad s \in \R_+.
\end{equation}
\end{proof}

\begin{remark}\label{rem:sigma}
It is worth to remark that $\psi(t,S_t,x)$, with $x \in \R_+$, represents the risk-neutral price at time $t \in [0,T)$ of the contract $H=\varphi(S_T)$ in a Black \& Scholes framework, where the constant volatility parameter $\sigma^{BS}$ is defined by 
$$
\sigma^{BS}:=\sigma_S\sqrt{\frac{x}{T-t}}.
$$
This is proved explicitly in Corollary \ref{th:call2} below for the special case of a \textit{plain vanilla} European Call option.
\end{remark}

\subsection{A Black \& Scholes-type option pricing formula}

Let us consider a European Call option with strike price $K$ and maturity $T$ and define the function $C^{BS}$ as follows
\begin{equation} \label{def:pricing_function}
C^{BS}(t,s,x):=s\mathcal N(d_1(t,s,x)) - K\exp\left(-\int_0^t r(u) \ud u\right)\mathcal N(d_2(t,s,x)),
\end{equation}
where
\begin{equation} \label{def:d1}
d_1(t,s,x)=\frac{\log\left(\frac{s}{K}\right) + \int_0^t r(u)\ud u + \frac{\sigma_S^2}{2}x}{\sigma_S \sqrt{x}}
\end{equation}
and $d_2(t,s,x)=d_1(t,s,x)-\sigma_S \sqrt{x}$, or more explicitly
\begin{equation} \label{def:d2}
d_2(t,s,x)=\frac{\log\left(\frac{s}{K}\right) + \int_0^t r(u)\ud u - \frac{\sigma_S^2}{2}x}{\sigma_S \sqrt{x}}.
\end{equation}
Here, $\mathcal N$ stands for the standard Gaussian cumulative distribution function
$$
\mathcal N(y)=\frac{1}{\sqrt{2\pi}} \int_{-\infty}^{y} e^{-\frac{z^2}{2}}\ud z, \quad \forall\ y \in \R.
$$

\begin{corollary} \label{th:call2}
The risk-neutral price $C_t$ at time $t$ of a European Call option written on the BitCoin with price $S$ expiring in $T$ and with strike price $K$ is given by the formula
\begin{equation}\label{eq:call}
C_t=\espp{C^{BS}(t,S_t,X_{t,T}^\tau)\bigg{|} S_t}, \quad t \in [0,T), 
\end{equation}
where 
the function $C^{BS}:[0,T) \times \R_+ \times \R_+ \longrightarrow \R$ is given by 
\eqref{def:pricing_function}
and the functions $d_1$, $d_2$ are respectively given by \eqref{def:d1}-\eqref{def:d2}. 
\end{corollary}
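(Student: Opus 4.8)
The plan is to obtain Corollary \ref{th:call2} simply as the instance of Theorem \ref{th:cont_claim} corresponding to the call contract function $\varphi(s)=(s-K)^+$; the entire content of the statement then reduces to identifying the inner function $\psi$ of \eqref{def:psi} with the Black \& Scholes pricing function $C^{BS}$ of \eqref{def:pricing_function}. Concretely, I would take $G(t,S_t,X_{t,T}^\tau,Y_{t,T})=(S_T-K)^+$, with $S_T$ expressed as in \eqref{eq:phi_S}, and focus on the inner conditional expectation defining $\psi(t,s,x)$, namely the expectation of the discounted payoff given $\F_t^W\vee\F_{T-\tau}^P$ once $S_t=s$ and $X_{t,T}^\tau=x$ have been frozen.

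For that inner expectation I would reuse the distributional fact established in point (ii) of Theorem \ref{th:sol} and recalled in the proof of Theorem \ref{th:cont_claim}: conditionally on $\F_t^W\vee\F_{T-\tau}^P$, the variable $Y_{t,T}$ is centered Gaussian with variance $X_{t,T}^\tau$, so that $Y_{t,T}=\sqrt{x}\,\epsilon$ with $\epsilon$ standard Normal. Writing the deterministic discount factor as $B_t/B_T$ and using $S_T=s\,(B_T/B_t)\exp(-\frac{\sigma_S^2}{2}x+\sigma_S\sqrt{x}\,\epsilon)$, the quantity $\psi(t,s,x)=(B_t/B_T)\,\bE[(s\,(B_T/B_t)\,e^{-\frac{\sigma_S^2}{2}x+\sigma_S\sqrt{x}\,\epsilon}-K)^+]$ collapses to a single one-dimensional Gaussian integral (the remaining expectation being over $\epsilon$). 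This is exactly the classical Black \& Scholes evaluation: splitting $(S_T-K)^+=S_T\I_{\{S_T>K\}}-K\I_{\{S_T>K\}}$, the second term yields the $K$-term of \eqref{def:pricing_function} weighted by $\mathcal N(d_2)$, while the first is handled by completing the square in $\epsilon$ (equivalently, a Girsanov shift $\epsilon\mapsto\epsilon+\sigma_S\sqrt{x}$), producing the $s\,\mathcal N(d_1)$ term, with $d_1,d_2$ in the normalized form \eqref{def:d1}--\eqref{def:d2}.

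Having established $\psi=C^{BS}$, I would conclude by inserting it into \eqref{eq:gen_option}, which yields precisely \eqref{eq:call}. I would also record, in the spirit of Remark \ref{rem:sigma}, that since the total conditional variance entering the integral is $\sigma_S^2 x$, the function $C^{BS}(t,\cdot,x)$ is literally a Black \& Scholes call price with effective volatility $\sigma^{BS}=\sigma_S\sqrt{x/(T-t)}$, which both justifies the name and proves the assertion made in that remark for the \emph{plain vanilla} case. I expect the main obstacle to be purely computational rather than conceptual: carrying the log-normal integral through while keeping the arguments of $\mathcal N$ in the exact form $d_1,d_2$ and matching the discount factor to \eqref{def:pricing_function}, together with the preliminary check that $(S_T-K)^+$ is $\widehat\P$-integrable — which follows from $\widetilde S$ being an $(\bF,\widehat\P)$-martingale — so that every conditional expectation above is well defined.
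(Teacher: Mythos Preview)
Your proposal is correct. The core mechanism --- conditional log-normality of $S_T$ once $X_{t,T}^\tau$ is frozen, followed by the standard Black \& Scholes integral --- is exactly what the paper exploits, but your presentation differs from the paper's in one noticeable respect.

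You invoke Theorem \ref{th:cont_claim} as a black box and then compute $\psi(t,s,x)$ as a \emph{one-dimensional} Gaussian integral in $\epsilon$, handling the $s\,\mathcal N(d_1)$ term by completing the square in the exponent. The paper instead re-derives the two-step conditioning directly (without citing Theorem \ref{th:cont_claim}) and, for the $J_1$ term, introduces an auxiliary probability measure $\bar\P$ on $(\Omega,\F_T)$ via the density $\exp(-\tfrac{\sigma_S^2}{2}X_T^\tau+\sigma_S Y_{0,T})$, applies Girsanov at the process level to obtain a new Brownian motion $\bar W$, computes under $\bar\P$, and then uses the Bayes formula twice to return to $\widehat\P$. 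Your completing-the-square step is of course the density-level shadow of that change of num\'eraire, so the two arguments are equivalent; yours is shorter and more elementary, while the paper's has the advantage of making the martingale-measure structure (and the role of $S$ as num\'eraire) explicit, which is what justifies the later remark that the law of $X_{t,T}^\tau$ is the same under $\widehat\P$ and $\bar\P$.
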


\begin{proof}
As in the proof of Theorem \ref{th:cont_claim}, let us assume that %
$\tau<T$. Under the 
minimal martingale measure $\widehat \P$,
the risk-neutral price $C_t$ at time $t \in [0,T)$ of a European Call option written on the BitCoin with price $S$ expiring in $T$ and with strike price $K$, is given by
\begin{align*}
C_t & = B_t \espp{\frac{\max \left( S_T-K,0\right)}{B_T}\bigg{|}\widetilde \F_t}\\
& = B_t\espp{\widetilde S_T\I_{\{S_T>K\}}\Big{|}\widetilde \F_t}-K \exp{\left(-\int_t^T r(u) \ud u\right)}\espp{\I_{\{S_T>K\}}\Big{|}\widetilde \F_t}\\
& =: B_t J_1 - K \exp{\left(-\int_t^T r(u) \ud u\right)}J_2,
\end{align*}
Recall that $Y_{t,T}= \int_t^T\sqrt{P_{u-\tau}}\ud \widehat W_u$, for every $t \in [0,T)$. Then, the term 
$J_2$ can be written as
\begin{align}
J_2 & = \espp{\espp{\I_{\{S_T>K\}}|\F_t^W \vee \F_{T-\tau}^P}\Big{|}\widetilde \F_t}\nonumber \\
& = 
\espp{\widehat \P\left(S_t\exp{\left(\int_t^T r(u)\ud u-\frac{\sigma_S^2}{2}X_{t,T}^\tau+\sigma_S Y_{t,T} \right)}>K \bigg{|}\F_t^W \vee \F_{T-\tau}^P \right)\bigg{|}\widetilde \F_t}\\
& = \espp{\widehat \P\left(\sigma_S Y_{t,T} > \log\left(\frac{K}{S_t}\right)- \int_t^T r(u)\ud u + \frac{\sigma_S^2}{2}X_{t,T}^\tau \bigg{|}\F_t^W \vee \F_{T-\tau}^P\right) \bigg{|}\widetilde \F_t}\\
& = 
\espp{\widehat \P\left(-\frac{Y_{t,T}}{
\sqrt{X_{t,T}^\tau}} < \frac{\log\left(\frac{S_t}{K}\right)+ \int_t^T r(u)\ud u - \frac{\sigma_S^2}{2}X_{t,T}^\tau}{
\sigma_S \sqrt{X_{t,T}^\tau}} \bigg{|}\F_t^W \vee \F_{T-\tau}^P\right) \bigg{|}\widetilde \F_t}\\
& = 
\espp{\mathcal N\left( d_2(t,S_t,X_{t,T}^\tau) \right) \bigg{|}\widetilde \F_t}, \nonumber 
\end{align}
as for each $t \in [0,T)$,
the random variable $\ds -\frac{Y_{t,T}}{\sqrt{X_{t,T}^\tau}}$ has a standard Gaussian law $\mathcal N(0,1)$ given $\F_t^W \vee \F_{T-\tau}^P$ under the minimal martingale measure $\widehat \P$.
Concerning $J_1$, 
consider the auxiliary probability measure $\bar \P$ on $(\Omega,\F_T)$ 
defined 
as follows:
\begin{equation} \label{def:barP}
\frac{\ud \bar \P}{\ud \widehat \P} := 
\exp{\left(-\frac{\sigma_S^2}{2} \int_0^T P_{u-\tau}\ud u + \sigma_S \int_0^T \sqrt{P_{u-\tau}}\ud \widehat W_u\right)}, \quad \widehat \P-\mbox{a.s.}.
\end{equation}
By Girsanov's Theorem, we get that the process $\bar W=\{\bar W_t,\ t \in [0,T]\}$, defined by
\begin{equation} \label{def:barW}
\bar W_t := \widehat W_t - \sigma_S\int_0^t\sqrt{P_{u-\tau}}\ud u, \quad t \in [0,T],
\end{equation}
follows a standard $(\bF,\bar \P)$-Brownian motion. In addition, using \eqref{def:tilde_S}, we obtain
\begin{equation} \label{def:tildeST}
\widetilde S_T = \widetilde S_t \exp{\left(\sigma_S\int_t^T\sqrt{P_{u-\tau}}\ud \bar W_u + \frac{\sigma_S^2}{2}\int_t^T P_{u-\tau} \ud u\right)},
\end{equation}
for every $t \in [0,T]$. 
Since $S$ is $\widetilde \bF$-adapted, by \eqref{def:tilde_S} and the Bayes formula on the change of probability measure for conditional expectation, for every $t \in [0,T)$ we get
\begin{align}
J_1 & = \espp{\widetilde S_T \I_{\{S_T>K\}}\Big{|}\widetilde \F_t}\nonumber\\
& = \widetilde S_t \espp{\exp{\left(-\frac{\sigma_S^2}{2} X_{t,T}^\tau +\sigma_S Y_{t,T}\right)}\I_{\{S_T>K\}}\bigg{|}\widetilde \F_t}\nonumber\\
& = \widetilde S_t \frac{\espp{\exp{\left(-\frac{\sigma_S^2}{2} X_{T}^\tau + \sigma_S Y_{0,T}\right)}\I_{\{S_T>K\}}\bigg{|}\widetilde \F_t}}{\exp{\left(-\frac{\sigma_S^2}{2} X_{t}^\tau + \sigma_S Y_{0,t}\right)}}\nonumber\\
& = \widetilde S_t \espbar{\I_{\left\{\widetilde S_T > K B_T^{-1}\right\}}\bigg{|}\widetilde \F_t}\nonumber\\
& = \widetilde S_t \espbar{\I_{\left\{\widetilde S_t \exp{\left( \sigma_S \bar Y_{t,T} + \frac{\sigma_S^2}{2}X_{t,T}^\tau\right)} > K \exp{\left(-\int_0^T r(u) \ud u\right)}\right\}}\bigg{|}\widetilde \F_t}\nonumber\\
& = \widetilde S_t \espbar{ \espbar{\I_{\left\{
 \sigma_S \bar Y_{t,T} > \log\left(\frac{K}{S_t}\right)- \int_t^T r(u)\ud u - \frac{\sigma_S^2}{2}X_{t,T}^\tau
\right\}}\bigg{|}\F_t^W \vee \F_{T-\tau}^P} \bigg{|}\widetilde \F_t}\nonumber\\
& = \widetilde S_t \espbar{\bar \P \left(
-\frac{\bar Y_{t,T}}{\sqrt{X_{t,T}^\tau}} < \frac{\log\left(\frac{S_t}{K}\right) + \int_t^T r(u)\ud u + \frac{\sigma_S^2}{2}X_{t,T}^\tau}{\sigma_S \sqrt{X_{t,T}^\tau}}
\bigg{|}\F_t^W \vee \F_{T-\tau}^P\right) \bigg{|}\widetilde \F_t}\nonumber\\
& = \widetilde S_t\espbar{\mathcal N\left(d_1(t,S_t,X_{t,T}^\tau) \right) \bigg{|}\widetilde \F_t}, \label{term2}
\end{align}
with 
$$
d_1(t,S_t,X_{t,T}^\tau) = d_2(t,S_t,X_{t,T}^\tau) + \sigma_S\sqrt{X_{t,T}^\tau}.
$$
In the above computations, analogously to before, we have set $\bar Y_{t,T}:= \int_t^T\sqrt{P_{u-\tau}}\ud \bar W_u$, for each $t \in [0,T)$. Consequently, we have  that $\bar Y_{t,T}$ 
conditional on $\F_{T-\tau}^P$, is a Normally distributed random variable with mean $0$ and variance $X_{t,T}^\tau$, for each $t \in [0,T)$, since $Z$ is not affected by the change of measure from $\widehat \P$ to $\bar \P$.
Indeed,
by the change of numéraire theorem, we have that the probability measure $\bar \P$ turns out to be the minimal martingale measure corresponding to the choice of the BitCoin price process as benchmark.
Further, by applying again
the Bayes formula on the change of probability measure for conditional expectation, 
we get
\begin{align}
J_1 & = \widetilde S_t\espbar{\mathcal N\left(d_1(t,S_t,X_{t,T}^\tau) \right)\bigg{|}\widetilde \F_t}\nonumber\\
& = \widetilde S_t \frac{\espp{\N\left(d_1(t,S_t,X_{t,T}^\tau) \right)\exp{\left(-\frac{\sigma_S^2}{2} \int_0^T P_{u-\tau}\ud u + \sigma_S \int_0^T \sqrt{P_{u-\tau}}\ud \widehat W_u\right)}\bigg{|}\widetilde \F_t}}{\exp{\left(-\frac{\sigma_S^2}{2} \int_0^t P_{u-\tau}\ud u + \sigma_S \int_0^t \sqrt{P_{u-\tau}}\ud \widehat W_u\right)}}\nonumber\\
& = \widetilde S_t \espp{\N\left(d_1(t,S_t,X_{t,T}^\tau) \right)\exp{\left(-\frac{\sigma_S^2}{2} \int_t^T P_{u-\tau}\ud u + \sigma_S \int_t^T \sqrt{P_{u-\tau}}\ud \widehat W_u\right)}\bigg{|}\widetilde \F_t}\nonumber\\
& = \widetilde S_t \espp{\espp{\N\left(d_1(t,S_t,X_{t,T}^\tau) \right)\exp{\left(-\frac{\sigma_S^2}{2} X_{t,T}^\tau + \sigma_S Y_{t,T}\right)}
\bigg{|}\F_t^W \vee \F_{T-\tau}^P }\bigg{|}\widetilde \F_t}\nonumber\\
& = \widetilde S_t \espp{\N\left(d_1(t,S_t,X_{t,T}^\tau) \right)\exp{\left(-\frac{\sigma_S^2}{2} X_{t,T}^\tau \right)}\espp{\exp{\left(\sigma_S Y_{t,T}\right)}
\bigg{|}\F_t^W \vee \F_{T-\tau}^P}\bigg{|}\widetilde \F_t}\nonumber\\
& = \widetilde S_t\espp{\mathcal N\left(d_1(t,S_t,X_{t,T}^\tau) \right)\bigg{|}\widetilde \F_t}, \label{term1}
\end{align}
since the conditional Gaussian distribution of $Y_{t,T}$ gives
$$
\espp{\exp{\left(\sigma_S Y_{t,T}\right)}\bigg{|}\F_t^W \vee \F_{T-\tau}^P}=
\exp{\left(\frac{\sigma_S^2}{2} X_{t,T}^\tau\right)}.
$$
Finally, gathering the two terms \eqref{term1} and \eqref{term2}, for every $t \in [0,T)$
we obtain
\begin{align}
C_t &= B_t\widetilde S_t\espp{\mathcal N\left(d_1(t,S_t,X_{t,T}^\tau) \right)\bigg{|}\widetilde \F_t}-K \exp{\left(-\int_t^T r(u) \ud u\right)}\espp{\mathcal N\left( d_2(t,S_t,X_{t,T}^\tau) \right) \bigg{|}\widetilde \F_t} \nonumber \\
&= S_t\espp{\mathcal N\left(d_1(t,S_t,X_{t,T}^\tau) \right)\bigg{|}\widetilde \F_t}-K \exp{\left(-\int_t^T r(u) \ud u\right)}\espp{\mathcal N\left( d_2(t,S_t,X_{t,T}^\tau) \right) \bigg{|}\widetilde \F_t} \nonumber \\
&=\espp{C^{BS}(t,S_t,X_{t,T}^\tau)\bigg{|}\widetilde \F_t}\nonumber \\
&= \espp{C^{BS}(t,S_t,X_{t,T}^\tau)\bigg{|} S_t}, 
\end{align}
where the last equality follows again from \citet[Lemma A.108]{pascucci2011pde}, since
for each $t \in [0,T)$, $X_{t,T}^\tau$ is independent of $\widetilde \F_t$  and $S_t$ is  $\widetilde \F_t$-measurable.

\end{proof}

It is worth noticing that the option pricing formula \eqref{eq:call} only depends on the distribution of $X_{t,T}^\tau$  which is the same both under measure $\widehat \P$ and $\bar \P$. 
As observed in Remark \ref{rem:sigma}, formula \eqref{eq:call} evaluated in $S_t$ corresponds to the Black \& Scholes price at time $t \in[0,T)$ of a European Call option written on $S$, with strike price $K$ and maturity $T$, in a market where the volatility parameter is given by $\sigma_S \sqrt{\frac{x}{T-t}}$.
Then, for every $t \in [0,T)$ it  may be written as:
\begin{equation} \label{eq:integr}
C_t=\int_0^{+\infty} C^{BS}(t,S_t,x) f_{X_{t,T}^\tau}(x) \ud x,  
\end{equation}
where 
$f_{X_{t,T}^\tau}(x)$ denotes the density function of $X_{t,T}^\tau$, for each $t \in [0,T)$ (if it exists).

Similar formulas can be computed for other European style derivatives as for binary options which, indeed, are quoted in BitCoin markets. For the case of a Cash or Nothing Call, which is essentially a bet of $A$ on the exercise event, the risk-neutral pricing formula is given by
\begin{align}
C_t & =A\exp{\left(-\int_t^T r(u) \ud s\right)}\espp{\mathcal N\left( d_2(t,S_t,X_{t,T}^\tau) \right)\bigg{|} S_t}\\
&= A\exp{\left(-\int_t^T r(u) \ud s\right)} \int_0^{+\infty} \mathcal N\left( d_2(t,S_t,x) \right) f_{X_{t,T}^\tau}(x) \ud x,\quad t \in [0,T).  \label{binarysimple}
\end{align}

The price at time $t$ for a \textit{plain vanilla} European option may also be written as a Black \& Scholes style price:
\begin{equation} 
C_t=S_t Q_1-K \exp{\left(-\int_t^T r(u) \ud s\right)} Q_2 
\end{equation}
where 
\begin{align} 
Q_1=\espp{\mathcal N\left(d_1(t,S_t,X_{t,T}^\tau) \right)\bigg{|}S_t}
&= \int_0^{+\infty} \mathcal N\left( d_1(t,S_t,x) \right) f_{X_{t,T}^\tau}(x) \ud x,\\
\end{align}

and \begin{align} 
Q_2=\espp{\mathcal N\left(d_2(t,S_t,X_{t,T}^\tau) \right)\bigg{|}S_t}
&= \int_0^{+\infty} \mathcal N\left( d_2(t,S_t,x) \right) f_{X_{t,T}^\tau}(x) \ud x.\\
\end{align}
To compute numerically derivative prices by the above formulas, we should compute the  distribution of $X_{t,T}^\tau$, which is not an easy task.

By applying the scaling property of Brownian motion (see e.g. \citet{carr2004bessel}), for every $t \in [0,T]$, we get
\begin{align*}
\int_0^t P_u \ud u & =P_0\int_0^t \exp{\left ( \sigma Z_u+ (\mu-0.5 \sigma^2) u \right) \ud u }\\
& = \frac{4 P_0}{\sigma^2}\int_0^{\sigma^2 t/4 }\exp{\left(2 (Z_v+ (0.5 \mu-0.25 \sigma^2) 4/ \sigma^2 v \right)\ud v)}=\frac{4 P_0}{\sigma^2} A_h^{(m)},
\end{align*}
where $A_h^{(m)}=\int_0^h \exp{\left( 2(Z_v+ m v )\right)}\ud v$ is the so-called Yor process with $h=\sigma^2 t/4 $ and $m= 2\mu/\sigma^2-1$.
The distribution of $X_{0,T}^\tau$ can be thus obtained through the distribution of the process $A_h^{(m)}$, for $h\geq 0$; a rich literature is devoted to this aim, starting from \citet{yor1992some}, e.g. \citet{dufresne2001integral,matsumoto2005exponential,matsumoto2005exponential2}. 
The main financial application of the above outcomes is to Asian options pricing: see, among others, the seminal paper \citet{geman1993bessel} and more recently \citet{carr2004bessel}.
A rigorous application of the outcomes of the quoted papers should provide a pricing formula by computing (at least numerically) the integral in \eqref{eq:integr} or in \eqref{binarysimple} 
but this is beyond the scope of this paper. Several approximations have been given in the literature to the distribution of the integral of a Geometric Brownian motion, among others \citet{levy1992pricing} and \citet{MilPosner}.

\section{Numerical application}

In \citet{levy1992pricing} the author claims that the distribution of the mean integrated Brownian motion $\frac{1}{s} \bar{P}_{0,s}$ can be approximated with a Log-normal distribution with mean $\alpha(s)$ and variance $\nu^2(s)$, at least for suitable values of the model parameters. In particular, by applying a moment matching technique the Log-normal parameters are given by:
$$
\alpha(u)=\log\frac{1}{u} \frac{\esp{\bar{P}_{0,u}}^2}{\sqrt{\left(\esp{\bar{P}_{0,u}^2}\right)}},
$$
$$
\nu^2(u)=\log \frac{\esp{ \bar{P}_{0,u}^2}}{\esp{\bar{P}_{0,u}}^2},
$$
which can be computed in terms of $\mu_P$ and $\sigma_P$, for all $u \in (0,T]$, from the outcomes in Lemma \ref{th:means}. 
Under the assumptions in the quoted paper and assuming $T>\tau$, the approximate distribution of $X_{0,T}^\tau$ can be derived from the approximate distribution of $\frac{1}{T-\tau} \bar{P}_{0,T-\tau}$ as 
$$
f_{X_{0,T}^\tau}(x)=\frac{1}{T-\tau} \mathcal{LN} pdf_{\alpha(T-\tau),\nu^2(T-\tau)} \left(\frac{x}{T-\tau}\right).
$$

where $ \mathcal{LN} pdf_{m,v} $ denotes the probability distribution function of a Log-normal distribution with parameters $m$ and $v$, defined as

$$
\mathcal{LN} pdf_{m,v}(y)=\frac{1}{y\sqrt{2\pi v}} e^{-\frac{(log (y)-m)^2}{2v}}, \quad \forall\ y \in \R^+.
$$

To have a visual evaluation of the appropriateness of the suggested approximation, we plot in Figure \ref{LNhistfit} the empirical distribution of the logarithm of $\frac{1}{T-\tau} \bar{P}_{0,T-\tau}$ , obtained by simulating $10000$ paths for the Geometric Brownian motion with $dt=1/10000$, $P0=100$ $\mu_P=0.03$, $\sigma_P=0.35$ and  $T-\tau=3$ months. In the same picture the Fitted normal distribution is superimposed.
It appears that the approximation is reliable at last for the parameter case under consideration.

Of course other possible approximations are available as, for example, the inverse gamma approximation in \citet{MilPosner}. Since we are interested in giving an example of application of the pricing formulas, further investigation on the optimal approximation method is beyond the scope of our paper. We just add that the inverse gamma approach holds in the limit when $T$ tends to infinity, which is rarely the case in option pricing, especially in the BitCoin market.

\begin{figure}[htbp]
\includegraphics[width=13cm, height=6cm]{
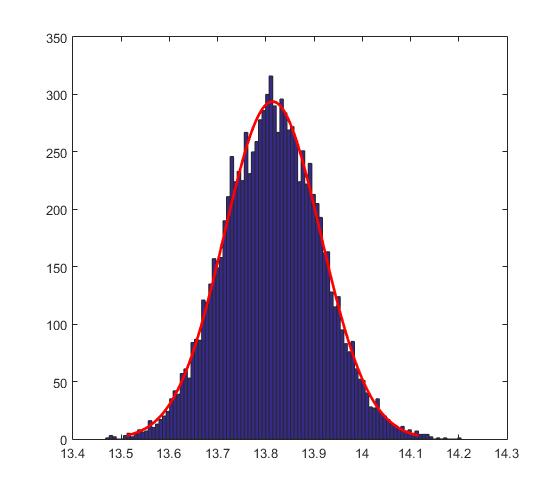} 
\caption{Empirical distribution of the logarithm of $\frac{1}{T-\tau} \bar{P}_{0,T-\tau}$  obtained by simulating 10000 paths for the Geometric Brownian Motion and the corresponding normal fit: $P_0=100$, $\mu_P=0.03$, $\sigma_P=0.35$ and  $T-\tau=3$ months. }\label{LNhistfit}
\end{figure}

In this case the Call option pricing formula becomes
\begin{equation}
C_0 = \frac{1}{T-\tau} \int_0^{+\infty}C^{BS}(0,S_0,x) \mathcal{LN} pdf_{\alpha(T-\tau),\nu^2(T-\tau)} \left(\frac{x}{T-\tau}\right) \ud x,  \label{callexample}
\end{equation}
which can be computed numerically, once parameters $\alpha(T-\tau),\nu(T-\tau)$ are obtained through the above equalities.

\begin{table}[tp]
\caption{Call option prices against different strikes $K$ and for different values of the confidence $P_0$ on BitCoins. Market parameters are $S_0=450,r=0.01,\mu_P=0.03,\sigma_P=0.35, \sigma_S=0.04$, $T=3$ months, $\tau$ = $1$ week (5 days).}
 \label{tab:prices}
 \centering
\begin{tabular}{||c|c|c|c|c|c||}
\hline 
K & 400 & 425 & 450 & 475 & 500 \\ 
\hline 
$P_0=10$ & 51.24 & 28.35 & 11.46 & 3.09 & 0.54 \\ 
\hline 
$P_0=100$ & 64.12 & 48.05 & 34.94 & 24.69 & 16.97 \\ 
\hline 
$P_0=1000$ & 128.68 & 117.75 & 107.77 & 98.66 & 90.35 \\ 
\hline
\end{tabular}
\end{table}

In Table \ref{tab:prices},  Call option prices are reported  in the case where $S_0=450$, $r=0.01$, $\mu_P=0.03$, $\sigma_P=0.35$, $\sigma_S=0.04$, $T=3$ months, $\tau$= 1 week (5 working days). Rows correspond to a different values of the confidence $P_0$ on BitCoins while columns to different values for the strike price. As expected, Call option prices are increasing with respect to confidence in the market and decreasing with respect to strike prices.
\begin{table} 
 \caption{Call option prices against different Strikes K and for different values of $T$ and $tau$. Market parameters are $S_0=450,r=0.01,\mu_P=0.03,\sigma_P=0.35, \sigma_S=0.04$ and $P_0$=100.}
\label{tab:prices2}
\centering
\begin{tabular}{||c|c|c|c|c|c||}

\hline 
K & 400 & 425 & 450 & 475 & 500 \\ 
\hline
$T$=1 month, $\tau$=1 week & 52.85 & 33.09 & 18.27 & 8.81 & 3.71 \\ 
\hline 
$T$=1 month, $\tau$=2 weeks & 51.58 & 30.62 & 15.18 & 6.13 & 2.00 \\ 
\hline 
$T$=3 months, $\tau$=1 week  & 64.12 & 48.05 & 34.94 & 24.69 & 16.97 \\
\hline
$T$=3 months, $\tau$=2 weeks & 62.95 & 46.65 & 33.42 & 23.18 & 15.60 \\ 
\hline
 \end{tabular}
 \end{table}

In Table \ref{tab:prices2}, Call option prices are summed up, for initial confidence value $P_0=100$, by letting the expiration date $T$ and the information lag $\tau$ vary. Again as expected,  for Plain Vanilla Calls the price increases with time to maturity.
Increasing the delay reduces option prices; of course the spread is inversely related to the time to maturity of the option. 
\begin{table}
\caption{Digital Cash or Nothing prices against different Strikes $K$ and for different values of the confidence $P_0$ on BitCoins. Market parameters are $S_0=450$, $r=0.01$, $\mu_P=0.03$, $\sigma_P=0.35$, $\sigma_S=0.04$, $T=3$ months, $\tau= 5$ days. The prize of the option is set to $A=100$.}
\label{tab:prices3}
\centering
\begin{tabular}{||c|c|c|c|c|c||}
\hline 
K & 400 & 425 & 450 & 475 & 500 \\ 
\hline 
$P_0=10$ & 97.17 & 82.77 & 50.31 & 18.87 & 4.24 \\ 
\hline 
$P_0=100$ & 70.07 & 58.38 &46.58 & 35.66 & 26.27 \\ 
\hline 
$P_0=1000$ & 45.70& 41.77 & 38.14 & 34.79 & 31.72 \\ 
\hline
\end{tabular}
 \end{table}

\begin{table} 
 \caption{Digital Cash or Nothing prices against different Strikes $K$ and for different values of $T$ and $\tau$. Market parameters are $S_0=450$, $r=0.01$, $\mu_P=0.03$, $\sigma_P=0.35$, $\sigma_S=0.04$ and $P_0=100$. The prize of the option is set to $A=100$.}
\label{tab:prices4}
\centering 
\begin{tabular}{||c|c|c|c|c|c||}
\hline 
K & 400 & 425 & 450 & 475 & 500 \\ 
\hline 
$T$=1 month, $\tau$=1 week & 86.93 & 69.97 & 48.27 & 28.11 & 13.83 \\ 
\hline 
$T$=1 month, $\tau$=2 weeks & 91.50 & 74.23 & 48.69 & 24.84 & 9.80 \\ 
\hline
$T$=3 months, $\tau$=1 week & 70.07 & 58.38 & 46.58 & 35.66 & 26.27 \\ 
\hline
$T$=3 months, $\tau$=2 weeks & 71.21 & 59.10 & 46.77 & 35.36 & 25.62 \\ 
\hline 

\end{tabular}
\end{table}

In Tables \ref{tab:prices3} and \ref{tab:prices4}, similar results are reported for Binary Options with outcome $A=100$, that are evaluated computing numerically the following integral:

\begin{equation}
C_0=\frac{A}{T-\tau} \int_0^{+\infty} \mathcal N\left( d_2(T-t,S_0,x) \right) \mathcal{LN} pdf_{\alpha(T-\tau),\nu^2(T-\tau)} \left(\frac{x}{T-\tau}\right)\ud x.  
\label{digitalexample}
\end{equation}

Table \ref{tab:prices3}  sums up Binary Cash-or-Nothing prices for $S_0=450$, $r=0.01$, $\mu_P=0.03$, $\sigma_P=0.35$, $\sigma_S=0.04$, $T=3$ months, $\tau= 1$ week (5 working days) against several strikes (in colums). Rows correspond to different values of the confidence $P_0$ on BitCoins. As expected, prices are decreasing with respect to strike prices. Here, \textit{in the money} (ITM) options values are decreasing with respect to $P_0$ while \textit{out of the money} (OTM) ones are increasing. The difference in ITM and OTM prices is large for low values of $P_0$, while it is very small for a high level of confidence in BitCoins. This may be justified by the fact that, when confidence in the BitCoin is strong, all bets are worth, even the OTM ones, since the underlying value is expected to blow up. Binary Call prices decrease with respect to time to maturity for ITM options and increase for OTM options which become more likely to be exercised. The influence of the delay value is tiny, as for vanilla options, being larger for short time to maturities.

\section{Concluding remarks}
In this paper we borrow the idea, suggested in recent literature, that BitCoin prices are boosted by confidence i.e. positive sentiment on the BitCoin system and underlying technology. In particular, we believe that over-confidence may explain the Bubbles documented in several papers about the analysed cryptocurrency. Main references in this area are \citet{MainDrivers, GoogleTrends, SentimentAnalysis,bukovina2016sentiment}. In order to account for such behavior we develop a model in continuous time which describes the dynamics of two factors, one representing the confidence index on the BitCoin system and the other representing the BitCoin price itself, which is directly affected by the first factor.  The two dynamics are possibly correlated and we also take into account a delay between the confidence index and its delivered effect on the BitCoin price. 
A different approach is considered in \citet{HencGou}  where the authors  model BitCoin price and the presence of  market Bubbles through a non causal discrete time model.
We investigate statistical properties of the proposed model and we 
show its arbitrage-free property. By applying the classical risk-neutral evaluation we are able to derive a quasi-closed formula for European style derivatives on the BitCoin with special attention of Plain Vanilla and Binary options for which a market already exists (e.g. https://coinut.com). 
Of course confidence or sentiment about BitCoin or, more generally, on cryptocurrencies or IT finance is not directly observed  but several variables may be considered as indicators, for instance the volume or the number of transactions. Alternatively, more unconventional sentiment indicators may be used as suggested in \citet{GoogleTrends, SentimentAnalysis,bukovina2016sentiment} as the number of Google searches, the number of Wikipedia requests about the topic or indicators  based on Natural Language Processing techniques,  by identifying string of words conveying positive, negative or neutral sentiment about the BitCoin system.
Several open problems are left for future research; first, the choice for the most suitable confidence index among the ones proposed is crucial for applications of our model.
Then, our attention will be also devoted to the full specification model with non-zero correlation between the two factors. In fact, as Figure \ref{CambioRho} suggests, we believe that the model we introduced is capable of describing Bubbles in the BitCoin market by simply modulating the correlation parameter value. 
Last but not least, taking advantage of the statistical properties described in Section 2, we will address the fit of the suggested model to observed data.

\medskip

 \begin{center}
{\bf Acknowledgements}
\end{center}
The authors are grateful to Banca d'Italia and Fondazione Cassa di Risparmio di Perugia for the financial support.

\bibliographystyle{plainnat}
\bibliography{biblio_BS1}

\appendix
\section{Technical proofs}\label{appendix:technical}

\begin{proof}[Proof of Lemma \ref{th:means}]
For reasons of clarity,  we provide a self-contained proof. 
For $t \in [0,\tau]$, the proof is trivial. 
Firstly, given $t > \tau$, we compute the expectation of $X_t^\tau$. 
By relationship \eqref{eq:int_info} and since $P_t >  0$ for each $t > 0$, by applying Fubini’s theorem we get
\begin{align*}
\esp{X_t^\tau} & = X_\tau^\tau + \esp{\bar P_{0,t-\tau}}= X_\tau^\tau + \esp{\int_0^{t-\tau} P_u \ud u}
= X_\tau^\tau + \int_0^{t-\tau} \esp{P_u} \ud u, \quad t > \tau,
\end{align*}
where 
for each $u \ge 0$, we have
\begin{align*}
\esp{P_{u}} & 
 = \phi(0)\exp\left(\left(\mu_{P}-\frac{\sigma_{P}^{2}}{2}\right)u\right)\esp{\exp\left(\sigma_{P}Z_{u}\right)} 
= \phi(0)\exp\left(\mu_{P}u\right),
\end{align*}
since $P$ is a geometric Brownian motion starting from $\phi(0)$.
Hence
$$
\esp{X_t^\tau}=X_\tau^\tau +\phi(0)\int_{0}^{t-\tau}\exp\left(\mu_{P}u\right) \ud u=X_\tau^\tau+\frac{\phi(0)}{\mu_{P}}\left(\exp\left(\mu_{P}(t-\tau)\right) -1\right), \quad t > \tau.
$$
Let us compute the variance of $X_t^\tau$, for each $t > \tau$. 
Given $t > \tau$, we have
\begin{align}
{\rm Var}[X_t^\tau] & 
= {\rm Var}\left[\int_0^{t-\tau} P_u \ud u\right]
= \esp{\left(\int_0^{t-\tau} P_u \ud u\right)^2} - \esp{\int_0^{t-\tau} P_u \ud u}^2,
\end{align}
with
\begin{align}
\esp{\left(\int_0^{t-\tau} P_u \ud u\right)^{2}} & = 
2\esp{\int_{0}^{t-\tau}P_{v}dv\int_{0}^{v-\tau}P_{u}\ud u} 
= 
2\esp{\int_{0}^{t-\tau}\int_{0}^{v-\tau}P_{u}P_{v}\ud v\ud u} \nonumber\\
& = 
2\int_{0}^{t-\tau}\int_{0}^{v-\tau}\esp{P_{u}P_{v}}\ud v\ud u, \label{eq:MP2}
\end{align}
where the last equality holds thanks to Fubini’s theorem. Moreover, by the increments independence property of the Brownian motion, for $0 < u < v \leq t$, we get
\begin{align*}
\esp{P_{u}P_{v}} & = \esp{P_{u}^{2}\exp\left(\left(\mu_{P}-\frac{\sigma_{P}^{2}}{2}\right)(v-u)+\sigma_{P}\left(Z_{v}-Z_{u}\right)\right)}\\
& =\exp\left(\left(\mu_{P}-\frac{\sigma_{P}^{2}}{2}\right)(v-u)\right) \esp{P_{u}^{2}\esp{\exp\left(\sigma_{P}\left(Z_{v}-Z_{u}\right)\right)|\F_u^P}}\\
& =\exp\left(\left(\mu_{P}-\frac{\sigma_{P}^{2}}{2}\right)(v-u)\right) \esp{P_{u}^{2}}\esp{\exp\left(\sigma_{P}\left(Z_{v}-Z_{u}\right)\right)}\\
& =\exp\left(\left(\mu_{P}-\frac{\sigma_{P}^{2}}{2}\right)(v-u)\right) \esp{P_{u}^{2}}\esp{\exp\left(\frac{\sigma_{P}^{2}\left(v-u)\right)}{2}\right)}=\exp\left(\mu_{P}(v-u)\right)\esp{P_{u}^{2}}.
\end{align*}
Further,
\begin{align*}
\esp{P_{u}^{2}} & 
=\phi^2(0)\exp\left(2\left(\mu_{P}-\frac{\sigma_{P}^{2}}{2}\right)u\right)\esp{\exp\left(2\sigma_{P}Z_{u}\right)}
=\phi^2(0)\exp\left(\left(2\mu_{P}+\sigma_{P}^{2}\right)u\right).
\end{align*}
Hence
\begin{equation} \label{eq:pp}
\esp{P_{u}P_{v}}=\phi^2(0)\exp\left(\mu_{P}(v-u)\right)\exp\left(\left(2\mu_{P}+\sigma_{P}^{2}\right)u\right),
\end{equation}
and by plugging \eqref{eq:pp} into \eqref{eq:MP2}, for every $t > \tau$ we have
\begin{align*}
\esp{\left(\int_0^{t-\tau} P_u \ud u\right)^{2}}
& = 
2\phi^2(0)\int_{0}^{t-\tau}\exp\left(\mu_{P}v\right) \int_{0}^{v-\tau}\exp\left(\left(\mu_{P}+\sigma_{P}^{2}\right)u\right) \ud u\ud v\\
&  = \frac{2\phi^{2}(0)}{\left(\mu_{P}+\sigma_{P}^{2}\right)\left(2\mu_{P}+\sigma_{P}^{2}\right)}\left[\exp\left(\left(2\mu_{P}+\sigma_{P}^{2}\right)(t-\tau)\right) -1\right]\\
& \qquad \qquad -\frac{2\phi^{2}(0)}{\mu_{P}\left(\mu_{P}+\sigma_{P}^{2}\right)}\left(\exp\left(\mu_{P}(t-\tau)\right) -1\right).
\end{align*}
Finally, gathering the results we get
\begin{align*}
{\rm Var}[X_t^\tau] & = 
\frac{2\phi^{2}(0)}{\left(\mu_{P}+\sigma_{P}^{2}\right)\left(2\mu_{P}+\sigma_{P}^{2}\right)}\left[\exp\left(\left(2\mu_{P}+\sigma_{P}^{2}\right)(t-\tau)\right) -1\right] \\
& \qquad - \frac{2\phi^{2}(0)}{\mu_{P}\left(\mu_{P}+\sigma_{P}^{2}\right)}\left(\exp\left(\mu_{P}(t-\tau)\right) -1\right)-\left(\frac{\phi(0)}{\mu_{P}}\left(\exp\left(\mu_{P}(t-\tau)\right) -1\right)\right)^2,
\end{align*}
for every $t > \tau$.
\end{proof}

\begin{proof}[Proof of Lemma \ref{lem:measure}]

Firstly, we prove that formula \eqref{eq:L} defines a probability measure $\Q$ equivalent to $\P$ on $(\Omega,\F_T)$. This means we need to show that $L^\Q$ is an $(\bF,\P)$-martingale, that is, $\esp{L_T^\Q}=1$. Since the $\bF$-progressively measurable process $\gamma$ can be suitably chosen, to prove this relation we can assume $\gamma \equiv 0$, without loss of generality.  
Set 
\begin{equation}\label{def:alpha}
\alpha_t := \frac{\mu_S P_{t-\tau}-r(t)}{\sigma_S \sqrt{P_{t-\tau}}}, \quad t \in [0,T].
\end{equation}
We observe that since $\phi(t) > 0$, for each $t \in [-L,0]$, in \eqref{eq:BSdyn},
by Theorem \ref{th:sol}, point (i), we have that $P_{t-\tau} > 0$, $\P$-a.s. for all $t \in [0,T]$, so that the process $\alpha=\{\alpha_t,\ t \in [0,T]\}$ given in \eqref{def:alpha} is well-defined, as well as the random variable $L_T^\Q$.
Clearly, $\alpha$ is an $\bF$-progressively measurable process. Moreover, $\int_0^T|\alpha_u|^2 \ud < \infty$ $\P$-a.s., since sample-path continuity of the process $P$ yields the fulfillment of the almost sure boundedness property by $P$; on the other hand, the condition $\phi(t)>0$, for every $t \in [-L,0]$, implies that almost every path of $\left\{\frac{1}{\sigma_S \sqrt{P_{t-\tau}}},\ t \in [0,T]\right\}$ is bounded on the compact interval $[0,T]$.
Set $\F_t^P:=\F_0^P=\{\Omega,\emptyset\}$, for $t \leq 0$. Then, $\alpha_u$, for every $u \in [0,T]$, is $\F_{T-\tau}^P$-measurable. 
Since $Z_{u-\tau}$ is independent of $W_u$, for every $u \in [\tau, T]$, the stochastic integral $\int_0^T \alpha_u \ud W_u$ conditioned on $\F_{T-\tau}^P$ has a normal distribution with mean zero and variance $\int_0^T |\alpha_u|^2 \ud u$. Consequently, the formula for the moment generating function of a normal distribution implies
$$
\esp{\exp{\left(\int_0^T \alpha_u \ud W_u\right)\bigg{|}\F_{T-\tau}^P}}=\exp{\left(\frac{1}{2}\int_0^T |\alpha_u|^2 \ud u\right)},
$$
or equivalently
\begin{equation}\label{eq:gen}
\esp{\exp{\left(\int_0^T \alpha_u \ud W_u-\frac{1}{2}\int_0^T |\alpha_u|^2 \ud u\right)}\bigg{|}\F_{T-\tau}^P}=1.
\end{equation}
Taking the expectation of both sides of \eqref{eq:gen} immediately yields $\esp{L_T^\Q}=1$.
Now, set $\widetilde S_t:= \ds \frac{S_t}{B_t}$, for each $t \in [0,T]$.
It remains to verify that the discounted BitCoin price process $\widetilde S=\{\widetilde S_t,\ t \in [0,T]\}$ is an $(\bF,\Q)$-martingale.
By Girsanov's theorem, under the change of measure from $\P$ to $\Q$, we have two independent $(\bF,\Q)$-Brownian motions $W^\Q=\{W_t^\Q,\ t \in [0,T]\}$ and $Z^\Q=\{Z_t^\Q,\ t \in [0,T]\}$ defined respectively by
\begin{align*}
W_t^\Q & := W_t + \int_0^t\alpha_u \ud u,\quad t \in [0,T],\\
Z_t^\Q & := Z_t + \int_0^t\gamma_s \ud s,
\quad t \in [0,T].
\end{align*}
Under the martingale measure $\Q$, the discounted BitCoin price process $\widetilde S$ satisfies the following dynamics
\begin{align*}
\ud \widetilde S_t & =  \widetilde S_t\sigma_S\sqrt{P_{t-\tau}}\ud W_t^\Q, \quad \widetilde S_0=s_0 \in \R_+,
\end{align*}
which implies that $\widetilde S$ is an $(\bF,\Q)$-local martingale. Finally, proceeding as above it is easy to check that $\widetilde S$ is a true $(\bF,\Q)$-martingale.

\end{proof}

\end{document}